\Crefname{figure}{Fig.}{Figs.}
\newtheorem{assumption}{Assumption}
\newcolumntype{P}[1]{>{\centering\arraybackslash}p{#1}}
\newcolumntype{P}[1]{>{\centering\arraybackslash}p{#1}}
\DeclareMathOperator*{\argmin}{arg\,min}
\newtheorem{remark}{Remark}
\newtheorem{theorem}{Theorem}
\newtheorem{lemma}{Lemma}
\newtheorem{corollary}{Corollary}
\def\BibTeX{{\rm B\kern-.05em{\sc i\kern-.025em b}\kern-.08em
    T\kern-.1667em\lower.7ex\hbox{E}\kern-.125emX}}
\begin{document}
\title{Fair Selection of Edge Nodes to Participate in Clustered Federated Multitask Learning}

\author{Abdullatif Albaseer,~\IEEEmembership{Member,~IEEE,}
        Mohamed Abdallah,~\IEEEmembership{Senior Member,~IEEE,}
     Ala Al-Fuqaha,~\IEEEmembership{Senior Member,~IEEE,} Abegaz Mohammed,~\IEEEmembership{Member,~IEEE,}
     Aiman Erbad,~\IEEEmembership{Senior Member,~IEEE,}
      Octavia A. Dobre,~\IEEEmembership{Fellow,~IEEE}
    \thanks{Abdullatif Albaseer, Mohamed Abdallah, Ala Al-Fuqaha, Abegaz Mohammed, and~Aiman Erbad are with the Division of Information and Computing Technology, College of Science and Engineering, Doha, Qatar (e-mail:\{aalbaseer, moabdallah, aalfuqaha,mabegaz, aerbad\}@hbku.edu.qa).}
    \thanks{Octavia A. Dobre is with the Faculty of Engineering and Applied Science, Memorial University, Canada. (e-mail: odobre@mun.ca).}
\thanks{* Preliminary results in this work are presented at the IEEE GLOBECOM Conference, 2021~\cite{albaseer2021client}}
}
\maketitle

\begin{abstract}
Clustered federated Multitask learning is introduced as an efficient technique when data is unbalanced and distributed amongst clients in a non-independent and identically distributed manner.
While a similarity metric can provide client groups with specialized models according to their data distribution, this process can be time-consuming because the server needs to capture all data distribution first from all clients to perform the correct clustering.
Due to resource and time constraints at the network edge, only a fraction of devices {is} selected every round, necessitating the need for an efficient scheduling technique to address these issues.
Thus, this paper introduces a two-phased client selection and scheduling approach to improve the convergence speed while capturing all data distributions. This approach ensures correct clustering and fairness between clients by leveraging bandwidth reuse for participants spent a longer time training their models and exploiting the heterogeneity in the devices to schedule the participants according to their delay.
The server then performs the clustering depending on predetermined thresholds and stopping criteria. When a specified cluster approximates a stopping point, the server employs a greedy selection for that cluster by picking the devices with lower delay and better resources.
The convergence analysis is provided, showing the relationship between the proposed scheduling approach and the convergence rate of the specialized models to obtain convergence bounds under non-i.i.d. data distribution. We carry out extensive simulations, and the results demonstrate that the proposed algorithms reduce training time and improve the convergence speed by up to 50\% while equipping every user with a customized model tailored to its data distribution.

\end{abstract} 

\begin{IEEEkeywords}
Distributed Learning, CFL, participants scheduling, resource allocation, non-i.i.d. and incongruent data distribution.
\end{IEEEkeywords}

\IEEEpeerreviewmaketitle

\section{Introduction}
Internet-of-Things (IoT) devices on wireless edge networks generate a vast amount of heterogeneous data that could be utilized to interpret the current behavior of the system or anticipate its future states.
 
Mobile Edge Computing (MEC) takes advantage of having edge devices and access points equipped with unrivaled capabilities to conduct complicated tasks and support intelligent services closest to these devices~\cite{chen2022joint,chen2022dynamic}. 
On MEC, \emph{Artificial Intelligence} (AI), and \emph{Machine Learning} (ML), in particular, have seen rapid advancements and have begun to deliver intelligent services that have the potential to revolutionize our lives.
Several recent studies have looked towards the use of ML techniques for IoT-edge applications, serving as an enabler for this vision~\cite{luo2020power}. Traditional ML techniques, on the other hand, require data to be outsourced and processed in a central spot, which poses a serious privacy threat, boosts the data size transferred by edge nodes, and exacerbates communication delays caused by limited resources~\cite{9352033}. 

Federated Learning (FL) is a promising decentralized ML solution that copes with these issues while preserving the data in place. 
Only model parameters (i.e., weights and biases) are shared with the server while the learning process takes place on edge devices \cite{9148111,9220780,mcmahan2016communication,9148475,10060393}. The server handles the process of developing the global model by collecting and averaging all updates performed by different edge devices. In each global round, the server regularly publishes the most recent global model for further updates. The server repeats these procedures until convergence of the global model to the optimum solution is attained.
{
Our work focuses primarily on applying FL to edge networks, so we refer to FL as Federated Edge Learning (FEEL) \cite{wang2020federated}. The main difference is that in FL, the cloud server can engage many different clients from different locations, whereas in FEEL, the training is conducted near the clients over wireless links.}

{When deploying FEEL, statistical and resource heterogeneity are considered the key challenges. 
For the statistical heterogeneity, the data amongst edge devices is distributed in a non-independent and identically distributed (non-i.i.d.) and  unbalanced fashion~\cite{mcmahan2016communication,9352033,9851847}. 
Regarding the resource heterogeneity challenge, the devices are heterogeneous, with different computation and communication capabilities. Also, bandwidth is restricted, limiting the number of devices willing to participate in a given FEEL round. This emphasizes the importance of practical and effective selection and scheduling algorithms that improve the convergence rate and produce an unbiased model that can optimally fit all incongruent data distributions, particularly in large-scale edge networks.}

{To tackle the statistical challenges, considerable research efforts have been devoted to studying and tackling this issue under different FEEL settings~\cite{zhu2021data,pang2020realizing,fallah2020personalized,sattler2020clustered,albaseer2022data,9319704,10092826}. Among all these solutions, Clustered Multi-tasks Federated Learning (CFL) \cite{sattler2020clustered,sattler2020byzantine} demonstrated exceptional performance by striking a balance between learning and cost. The geometry of the FEEL loss surface is utilized by CFL to cluster service users into groups using data distributions that can be trained at the federal level. 
It is worth mentioning that conventional FL implies that just a single global model is always being trained for all clients regardless of the discrepancies in their data distribution.
 On the other hand, Sattler \textit{et al}.~\cite{sattler2020clustered} proved that these assumptions are commonly violated in realistic applications. 
In particular, it is demonstrated that at each stationary point of the FL main objective, the similarities between the local models of different participants could be measured using cosine similarity to deduce the groupings of participants with contradictory distributions (i.e., incongruent distributions).}

{The key advantage of CFL, in to multi-task FEEL \cite{smith2017federated} which learns several models for several related tasks and personalized FEEL~\cite{zhang2020personalized} which {provides} each participant (i.e., participating client) with a customized model, {is that} {CFL} does not necessitate any change to the underlying FEEL communication protocol even though clustering occurs only after the conventional FL model reaches to {a stationary} point. CFL is interpreted as a post-processing scheme {that} can strengthen the FEEL performance by properly clustering all clients and effectively producing a customized model for each cluster. Recently, the investigations in  \cite{xie2020multi, ghosh2020efficient, briggs2020federated}  followed up on the CFL-based methodology and verified that CFL is more reliable for attaining higher accuracy than the traditional FEEL when the data is heterogeneous and non-i.i.d.}

{Several studies \cite{xu2020client, shi2020joint,yang2019scheduling,chen2022federated,9844147}  have sought to solve the challenge of participants' scheduling and resource allocation (i.e., resource heterogeneity challenges) while taking non-i.i.d. data distribution (i.e., statistical challenges). For example, the authors in \cite{xu2020client, shi2020joint,yang2019scheduling,chen2022federated,9844147} proposed several techniques to select participants during each round of federated averaging, prioritizing those with lower training latency, better communication and computation resources, or lower energy consumption, under the assumption that each client holds the same amount of data (i.e., balanced). Yet, neither of these works~\cite{xu2020client, shi2020joint,yang2019scheduling,chen2022federated,9844147}  consider the scheduling problem for the CFL technique, which requires capturing all data distributions from all clients to perform the correct clustering based on their data distributions. This means that the scheduling approaches in the literature for traditional FL are not applicable to CFL. If devices with better channels or lower latency are regularly selected to participate in the training, the resulting models will be biased since other devices not engaged in training may have different data distributions. The challenge of scheduling in CFL is a crucial issue that remains unresolved and requires further investigation.}

In response to all these remarks, this work proposes a new client scheduling framework for CFL over wireless edge networks to minimize training time and expedite the rate of convergence while equipping every group of devices with the optimal model that closely fits their data distribution. We account for the insufficient resources at the edge network (i.e., bandwidth) and the deadline constraints that the server sets to prevent a longer waiting time for updates to start a new training round. We can summarize our key contributions as follows:
\begin{itemize}
\item Propose a novel client scheduling algorithm for CFL to cope with the problem of limited resources while performing efficient clustering to tackle the non-i.i.d. and  unbalanced data distribution problems. The proposed algorithm is based on the fairness between the clients across the network in such a way that all clients have equal chances of being selected to participate in the training phase, despite their channel states and data sizes. This will enable the edge system to imbue the clients with more specialized models rather than biased models.
\item Formulate a joint optimization problem for resource allocation and scheduling aiming to reduce the training latency and improve the convergence speed, taking into account  unbalanced data distribution and non-i.i.d., device heterogeneity, and insufficient resources. Due to the NP-hardness of the problem, we propose a heuristic-based solution depending on device heterogeneity and bandwidth reuse to fairly aggregate all updates from all devices. 

\item Bound the impacts of the proposed approach on the convergence of the group's models concerning the number of scheduled clients, {and} incongruent and congruent data distribution. 

\item Perform experimental evaluation using two federated datasets, FEMNIST and CIFAR-10, under non-i.i.d and unbalanced data distribution. The results confirm that our proposed solutions effectively minimize the training latency and improve the convergence speed while attaining a satisfying performance.
\end{itemize}

The rest of the paper is structured as follows: we present the related work in Section~\ref{sec:literature}. The system, learning, computation, and communication models are introduced in Section~\ref{sysmodel}. 
Next, we formulate the minimization problem in Section ~\ref{sec:Problemformulation}. The proposed scheduling framework is introduced in Section~\ref{Sec:proposed}. The convergence analysis is given in Section~\ref{sec:conv_analysis}, where we derive the relationships between the proposed algorithm and the convergence rate.
We evaluate the proposed scheduling framework in Section~\ref{sec:experiment} where the experimental setup is outlined, and the numerical results are discussed. Finally, we summarize this paper and provide directions for future extensions in Section~\ref{conclusion}. 
\begin{figure*}[htbp]
  \centering
  \includegraphics[scale=0.3]{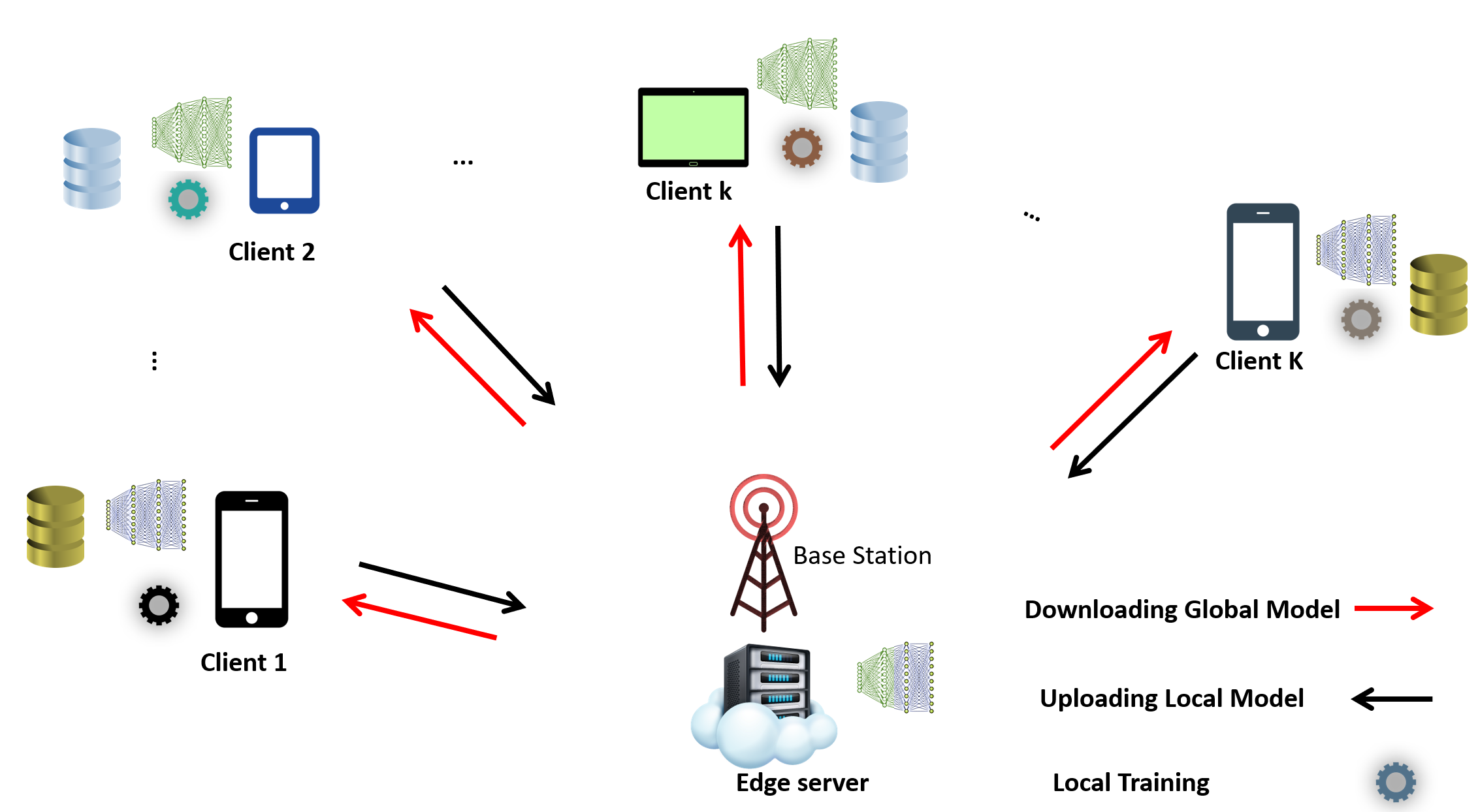} \setlength\belowcaptionskip{0cm}
  \caption{System overview where the CFL is performed at edge networks.}
  \label{WFEEL}
\end{figure*}
\section{Related Work}
\label{sec:literature}
The study of FL deployment via edge networks from various perspectives has gained a lot of interest in the literature.
A decentralized stochastic gradient descent approach was studied in~\cite{amiri2020federated} considering the context of a bandwidth limit in several channel conditions, where every client is chosen opportunistically for transmission according to its channel state. 
Earlier works considered perfect updates-upload tasks to address the delay aspects' challenges in {FEEL}. Yet, the impacts of wireless channels are ignored, especially the ability to leverage the characteristics of the channels (e.g., fading, multi-access, and broadcasting) to reduce the latency. Therefore, the broadband analogue aggregation technique is fine-tuned for probabilistic channels due to restricted radio spectrum, particularly channel capacity \cite{amiri2020federated}. 
To be more precise, prior to transmitting the models, the edge devices assess the sparsity of the gradients and then transfer them to a lower-dimensional realm constrained by the limited channel capacity. In addition, the works in \cite{chen2020convergence} investigated the convergence of the FL algorithm across wireless links, taking into account the effect of unstable connections on the global model convergence.

Recently, the work in \cite{feng2021design} introduced a solution to optimize the accuracy and cost under unreliable wireless channels. However, all analytical expressions are accounted for in the convex ML setting (not addressing the non-convexity of the deep learning algorithm). 

Focusing on scheduling policies and participants' selection, to minimize the latency during the FL training process, several client scheduling techniques were proposed in \cite{yang2019scheduling,shi2020joint,albaseer2021client,albaseer2021threshold,albaseer2021fine,9416805,9766403}. The aim is to improve the speed of the convergence rate as well as address the limitations of communication and communication resources. For example, Amiri \textit{et al}., \cite{amiri2021convergence} evaluated four alternative update quantization-based scheduling strategies. However, the authors first assumed a data symmetry among all the clients; second, they utilized stochastic gradient descent (SGD), which needs more rounds to converge on the small data. Third, data heterogeneity was not considered, so the clients with the best channels and significant L2-norm differences control the global model, resulting in a biased model. Moreover, the researchers in \cite{wu2021node} developed a new client strategy depending on the direction of the updated gradient. This strategy, however, would almost certainly result in a skewed model, particularly for a high degree of non-i.i.d. {The models with various gradient directions will not be reflected in the global model version. Lately, the work in \cite{9809926}, studied the client selection for hierarchical FL where the connections keep changing while the works in \cite{9824246,9844147} proposed approaches either to minimize the energy or to apply semi-supervised learning. Huang \textit{et al.} \cite{9766408} investigated the client selection problem, considering a volatile context scenario in such a way that some selected clients may not succeed in finishing their training and uploading tasks.  }

To conclude, although considerable works {were} devoted to optimally selecting and scheduling the participants over a wireless edge network, they only work for the traditional FL where all clients aim to train only a single model. However, CFL is entirely different from a theoretical point of view, where all incongruent data distributions must be captured in order to produce more reliable and efficient models that optimally fit the distributions of the data amongst clients. For that, a more reliable scheduling framework should be considered to enable efficient CFL systems at the network edge.

\section{System Model}
\label{sysmodel}
As \Cref{WFEEL} depicts, this work considers a set of $\mathcal{K} = \{1, \dots, K\}$ heterogeneous edge devices, $K = |\mathcal{K}|$, associated with and coordinated by an edge server via a base station (BS). Each  individual $k \in \mathcal{K}$ device owns a local dataset $\mathcal{D}_k$, with $D_k = |\mathcal{D}_k|$ number of samples, and the total number of samples amongst all devices is $D=\sum_{k=1}^{K}{D_k}$. Each local dataset, $\mathcal{D}_k$, comprises a number of sample data with input-output pairings as follows: $\{\mathbf{x}_{i,d}^{(k)},y_i^{(k)}\}^{D_k}_{i=1}$, where $\mathbf{x}_{i,d}^{(k)} \in \mathbb{R}^d$ is an input with $d$ features, and $y_i^{(k)} \in \mathbb{R}$ is the accompanying class-labeled output. Each device behaves differently based on its activities, generating different sizes of local data; thus, $\mathcal{D}_k$ is non-i.i.d. and  unbalanced. As previously stated, the connections between the devices and the server are made via an unreliable wireless link. Also, the devices themselves are resource-constrained, posing the challenge of scheduling and selecting participants in order to improve the convergence speed and minimize training time, particularly for CFL, where the main objective is to train a set of models rather than a single model as in conventional training. For the reader's convenience, the main symbols utilized in this work are listed in Table~\ref{Tab:Notationch5}.
\begin{table}[h]
\centering
\scriptsize
\caption{{List of important symbols.}}
\label{Tab:Notationch5}
\begin{tabular}{p{1cm}|p{5cm}}
 \hline
Abbrev. & Description\\ [0.5ex] 
 \hline
${K}$	& number of clients (i.e., available devices) \\
$k$	& client index $k$ where $k \in \mathcal{K}$\\
$\mathcal{D}_k$ & $k$-\textit{th} client's local data \\
$\mathcal{D}$ & Total Data samples amongst clients \\
$\{\mathbf{x}_{i,d}^{(k)},y_i^{(k)}\}$ & The input and the accompanying class-labeled output\\
${{\boldsymbol{W}}}_r$ & The parameters of the global model at $r$-\textit{th} round \\
$F_r({{\boldsymbol{W}}})$ & The loss function of the global model at $r$-\textit{th} round \\
$f_{i}$ & The local loss function for each data sample $i$ \\
${\boldsymbol{W}}_k$ & The model parameters of the $k$-\textit{th} client \\
$\mathcal{T}$ & Number of updates performed locally to update the model \\
$\eta$ & Learning rate \\
$E$ & Number of epochs \\
$b$ & Batch size \\
$T^{cmp}_k$ & Local computation time of $k$-\textit{th} client\\
$\mathcal{T}_k$  & Number of local updates \\
$T^{tot}$ & The time budget for the whole training process \\
$f_k$ & The used CPU speed at $k$-\textit{th} client\\
$\Phi_k$ & The required CPU cycles to process one local sample \\
$T_r$ & The round deadline determined by the server \\
$T^\mathrm{ trans}_k$ & The upload time (i.e., uploading latency) \\
$r_{k}^{r}$ & Transmit data rate achieved by the $k$-{th} client\\
$B$ & The bandwidth size\\
$\lambda^k_r B$ & The allocated bandwidth for client $k$ at round $r$\\
$N$ & Number of OFDMA sub-channels \\ 
$\Omega_r$ & Participant's selection set at r-\textit{th} round\\
${S}_r^k$ & The binary variable to specify whether the client is selected $1$ or not $0$ \\
$h^{k}_r$ & The uplink channel gain between the $k$-{th} client and the BS at $r$-\textit{th} round\\
$P^{r}_k$ & The $k$-\textit{th} client transmit power \\
$\xi $ & Model size \\
 $F_k(\boldsymbol{W}_r)$ & Local loss function at k-\textit{th} client\\
 $\nabla F_k({\boldsymbol{W}})$ & Local gradient at k-\textit{th} client \\
 $\varepsilon_1$, $\varepsilon_2$ & Hyper-parameters to control the clustering\\
 $I(k)$ & The data distribution of client $k$\\
\hline
\end{tabular}
\end{table}
\vspace{-0.04in}
\subsection{FEEL model}
\vspace{-0.04in}
In FEEL, the aim is to develop a collaborative global model to be used across the network. To do that, the server initiates the global model $\boldsymbol{W}^0$. Then, at every $r$-\textit{th} round, the server selects only a subset of devices ${\Omega_r}$ every round due to the limited resources (i.e., bandwidth sub-channels). Then, all participants receive a copy of the global model $\boldsymbol{W}_{r-1}$ in a multicast scheme. {Each $k$-\textit{th} selected device employs its local solver, such as SGD, to train the local model by minimizing the loss function over the number of local epochs denoted by $E$.
For example, let us assume that each device uses a mini-batch SGD; thus, at every epoch, the data is divided into batches in which the local solver performs an update on every single batch. As a result, the number of local updates performed by each participant during each global round is defined as follows:
\begin{align}
\label{local_updates}
   \mathcal{T}_k  = E~\frac{D_k}{b}, 
\end{align}
where $b$ is the batch size to determine the number of samples used for one local update.} After finishing the local training, each selected device, $k \in \Omega_r$, uploads its updated model to the edge server, which in return collects and fuses all updates to create a new global model. Regularly, the server coordinates the learning process to seamlessly find the optimal model parameters ${{\boldsymbol{W}}_r} \in \mathbb{R}^d$ that are able to learn the linked output patterns $y_i$ by repeatedly minimizing the corresponding loss function as:
\begin{equation}
  f_i({{\boldsymbol{W}}_r})=\ell(\mathbf{x}_{i,d}^{(k)},y_i^{(k)}; {{\boldsymbol{W}}_r}),
\end{equation}
 in every $r$-th global round where the local loss function is defined as:
\begin{equation}
   F_k({{\boldsymbol{W}}_r}, \mathcal{D}_k):= \frac{1}{D_k}\sum_{i \in \mathcal{D}_k} 
  f_i({{\boldsymbol{W}}_r}). 
\end{equation}
Consequently, given datasets $\mathcal{D}_1,..,\mathcal{D}_K$ amongst a set of edge devices, the global objective is to find the minimum reciprocal value of the loss function for a set of local's data $\mathcal{D}=\cup_k \mathcal{D}_k$ as follows:
\begin{equation}
\begin{aligned}
\label{eq:risk}
\min_{\boldsymbol{W}} F({\boldsymbol{W}}, D)
   =\sum_{k=1}^K\frac{D_k}{D}\underbrace{F_k({{\boldsymbol{W}}_r}, \mathcal{D}_k).}_{\text{local loss}} \quad\quad
\end{aligned}
\end{equation}
\begin{figure*}[htbp]
  \centering
  \includegraphics[scale=0.7]{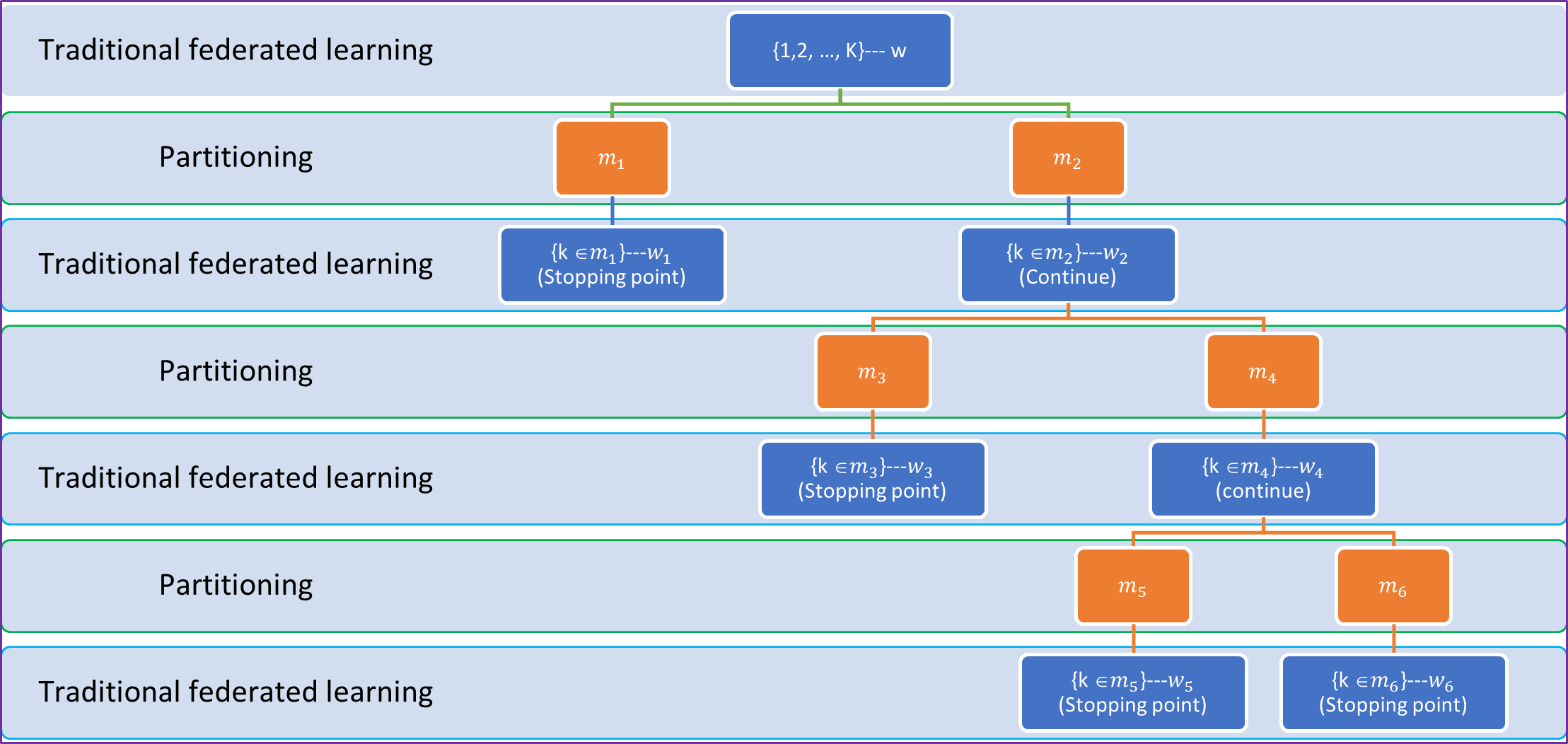}
  \caption{CFL representation as a parameter tree where the root is the traditional FL~\cite{sattler2020clustered}.}
  \label{fig:cfltree}
\end{figure*}
\vspace{-0.08in}
\subsection{Local Computation and Communication Models}
As previously indicated, the BS is unable to serve all existing devices due to a limited number of bandwidth sub-channels. Therefore, in every FEEL round, only a part of these devices, $\Omega_r$, can upload their updates. In particular, the selected set is defined as: 
\begin{equation}
  \Omega_r=\{k \mid s^k_r=1, k=1,2,\dots,K\},
\end{equation}
where $s^k_r=1$ indicates that client $k$ is on the participating set $\Omega_r$, otherwise $s^k_r=0$.
The delay amongst the selected set includes two parts, i.e., uploading delay and computing delay. For the uploading latency, all clients are assumed to hold a similar model architecture, ${{\boldsymbol{W}}_r}$, determined by the system administrator, which has a size of $\xi $. We employ the orthogonal frequency division multiple access (OFDMA) technique for the communication between the BS and the devices, where each $k$-\textit{th} participant (i.e., selected device) is given bandwidth of size $\lambda^k_r B$. To be more specific, assuming that the bandwidth is split into $N$ sub-channels depending on the model size, we can define the number of sub-channels as follows:
\begin{equation}
  N = \frac{B}{\xi},
\end{equation}
where each $1$-\textit{th} sub-channel of size $\lambda^k_r B$ is allocated for each participant. Hence, each k-\textit{th} device can achieve a data rate defined as: 
\begin{equation}
   r_k^r=\lambda^k_r B\ln{\left( 1+\frac{P^{k}_r {|h^{k}_r|^2}}{N_0} \right)},
\end{equation}
where $h^{k}_r$ is the channel gain of the link between client $k$ and the server, $P^{k}_r$ is the $k$-\textit{th} client's transmission power to upload the local model, and $N_0$ denotes background noise. Hence, the uploading delay could be estimated as:
\begin{equation}
  T^k_{trans}= \frac{\xi }{ r_k^r} \cdot \label{trans_latency} 
\end{equation}
For the computing latency, each device takes a time defined as:
\begin{equation}
  T^k_{cmp}= E\frac{\phi_k D_k}{f_k}, 
\end{equation}
to train its local model, where $f_k$ (cycles/second) is the central processing unit (CPU) frequency; and $\phi_k$ (cycles/data point) denotes CPU cycles to process one data point. Therefore, the total computing and uploading delay for each $k$-th participant at the $r$-th round is defined as:
\begin{equation}
   T^k_{tot}= T^k_{trans} + T^k_{cmp}.
\label{total_T} 
\end{equation}
{In a real FEEL scenario, the system administrator or the coordinating server determines a time constraint (i.e., a round deadline constraint) in such a way that each participating device has to finish its tasks within this time. Specifically, this time could be set to the latency of the slowest participant $k \in \Omega_r$, which is defined as:
\begin{equation} 
T_{r} = \max\{s^k_r(T^k_{trans} + T^k_{cmp})\}.
\end{equation}}
\subsection{Clustered Federated Learning (CFL)}
{Unlike traditional FL, whereby all clients train a single model collaboratively, the main goal of CFL is to cope with incongruent data distribution and provide a group of clients with a specialized model tailored to their local data distribution. More precisely, CFL~\cite{sattler2020clustered} intends to generalize the traditional FEEL assumption to a set of participating devices holding a similar data distribution so as to consolidate the aforementioned challenges.

\begin{assumption}
\label{ass:2}
\textbf{(``CFL")~\cite{sattler2020clustered}:} 
\textit{
The client population can be partitioned as $\mathcal{M}=\{c_1,\dots,c_m\}$, $\bigcup_{i=1}^M c_i = \{1,\dots,K\}$ in which every subgroup $c\in\mathcal{M}$ meets the conventional FL assumption.
}
\end{assumption}

Here, $\mathcal{M}$ is the clusters' set, and $M = |\mathcal{M}|$ is the number of clusters whereby the participating devices with similar data distribution (i.e., congruent) are grouped into a single cluster.
The FEEL system can be regarded as a representative parameter tree in CFL, as shown in \Cref{fig:cfltree} \cite{sattler2020clustered}. At the root node remains the conventional FEEL model, approaching the stationary point ${\boldsymbol{W}}^*$. 
In the subsequent layer, the client population is divided into two groups based on their cosine similarities, and every subgroup should reach a stationary point ${\boldsymbol{W}}^*_1$ and $\boldsymbol{W}^*_2$, respectively.
The recursive branching continues until no further partitions are possible. }
It is worth mentioning that the \emph{cosine similarity}, $sim$, between any two devices' updates, device $k$ and device ${k'}$, is calculated by:

\begin{footnotesize}
\begin{align}
\begin{split}
 sim_{k,{k'}}:= sim(\nabla F_k({\boldsymbol{W}}), \nabla F_k'({\boldsymbol{W}}))&:=\frac{\langle\nabla F_k({\boldsymbol{W}}), \nabla F_{k'}({\boldsymbol{W}})\rangle}{\|\nabla F_k({\boldsymbol{W}})\|\|\nabla F_{k'}({\boldsymbol{W}})\|}\\
\label{eq:cossim}
&=\begin{cases}
1 & \text{ if }I(k)=I(k')\\
-1 & \text{ if }I(k)\neq I(k'),
\end{cases}
\end{split}
\end{align}
\end{footnotesize}
\noindent where both $I(k)$ and $I(k')$ denote the data distributions of $k$ and $k'$, respectively. Accordingly, the correct bi-partitioning is obtained by: 
\begin{equation}
  m_1=\{k| sim_{k,0}=1\},~m_2=\{k| sim_{k,0}=-1\}.
\end{equation}
As in \cite{sattler2020clustered}, the separation is only executed if the following two conditions are fulfilled:
\begin{align}
\label{eq:servernorm}
0 \leq \left|\left|\sum_{k=1,\dots,K}\frac{D_k}{D}\nabla_{\boldsymbol{W}} F_k({\boldsymbol{W}}^*)\right|\right|<\varepsilon_1.
\end{align}  
We can note that \eqref{eq:servernorm} retains that the achieved solution is approaching the stationary point of the conventional FL objective. In contrast, the participants are away from the stationary point of their local loss if it is aligned with the following condition:
\begin{align}
\label{eq:clientnorm}
\max_{k=1, \dots, K} \|\nabla_{\boldsymbol{W}} F_k({\boldsymbol{W}}^*)\|>\varepsilon_2 > 0,
\end{align}
where $\varepsilon_1$ and $\varepsilon_2$ {control} hyperparameters to manage the clustering task.
\begin{remark}
The split will not be performed if all clients have the same data distribution. The reason for this is that both of the aforementioned conditions cannot be fulfilled for the same data distribution. As a result, we revert to the traditional FL with a single model.
  \end{remark}
\section{Problem Formulation}
\label{sec:Problemformulation}
As in \cite{xie2020multi, ghosh2020efficient, briggs2020federated}, all participants in CFL can be either included when the training process starts, or just a random subset is selected every round. Due to limited resources, the former assumption is impracticable for deploying CFL at the network edge (i.e., the number of sub-channels with respect to the model size). On the other hand, the latter will not catch all incongruent data distributions amongst participants, posing the challenge of effectively developing a client selection and scheduling strategy while maintaining edge wireless network restrictions and attaining required performance (i.e., efficient specialized models for all clusters). To be more specific, to properly theorize the setting of conventional FL, it is essential to partition clients of incongruent data distributions as early as possible to expedite the convergence rate and minimize associated costs in light of bandwidth as well as data and device heterogeneity. Hence, improving the convergence speed and reducing the total training latency for all resulting specialized models (i.e., the $M$ group models and one conventional FL model) depends on how we optimally select and schedule the client to compute and upload their updates and how to allocate the resources for the uploading task.  
Let $R$ be the number of global rounds completed throughout the training time budget of $T_{tot}$. The {goal} is to find the optimal selecting and scheduling strategy that obtains the optimal set of $M$ models, $\boldsymbol{W^*} = \{{{\boldsymbol{W}}_m}| m=1, 2, \dots, M\}$ that minimizes the global loss function for every group within $T_{tot}$. Mathematically speaking, the main objective is to find the optimal model parameters for each group as {follows}:  
\begin{equation}
\setlength\abovedisplayskip{3pt}
\setlength\belowdisplayskip{3pt}
      {{{\boldsymbol{W}}_m}} \triangleq \underset{{{\boldsymbol{W}_m}}\in \{ {{\boldsymbol{W}}}_r^{{\Omega_{r}}}:r=1,2,\dots,R\}}{\text{arg\,min}} F({{\boldsymbol{W}}_m}).  
      \label{def-tildew}
\end{equation}
Hence, the minimization problem for all clusters can be posed as follows:  
\begin{align}
\setlength\abovedisplayskip{3pt}
\setlength\belowdisplayskip{3pt}
      \underset{{{\boldsymbol{W^*}}}, R, { \Omega_r}, {T}_{[R]}}{\text{min}} \quad & \sum_{m=1}^M F({{{\boldsymbol{W}}_m}}) \tag{P1}\label{P1}\\
      \text{s.t.} \qquad \;\; \quad & F({{{\boldsymbol{W}}_m}}) - F({{{\boldsymbol{W}}^*_m}}) \le \epsilon,   \quad   \forall m \in \mathcal{M},   \tag{P1.0}\label{P10} \\  
      & \sum_{r=1}^R T_{r}(\Omega_r) \leq T_{tot}, \quad   \forall r \in [R],   \tag{P1.1}\label{P11}\\
      &s^k_r T^k_{tot}\leq T_r(\Omega_r), \quad   \forall r \in [R], \forall k \in \mathcal{K}, \tag{P1.2}\label{P12} \\
      & T_{r} = \max\{s^k_r(T^k_{trans} + T^k_{cmp})\}, \quad \forall r, \forall k,   \tag{P1.3}\label{P13} \\  
      & {|\Omega_r|} \le N, \quad   \forall r \in [R], \tag{P1.4}\label{P14}\\
      & s^k_r \in \{0,1\} \tag{P1.5}\label{P15},
\end{align}
where ${ \Omega_r} = [\Omega_1, \Omega_2, \dots, \Omega_R]$ denotes the selected scheduling sets during the training rounds{,} and ${T}_{[R]} = [T_1, T_2, \dots, T_R]$ is the maximum latency of every round (i.e., the deadline).  
Constraint ~\eqref{P10}   is intended to guarantee that every subgroup model converges to the optimal model {. We} can see that this constraint ensures fairness across the groups where each group has optimal model parameters. The constraint \eqref{P11}   indicates that the total training time during the FEEL process does not exceed the assigned time budget. Furthermore, constraint~\eqref{P12} is related to the deadline constraint as well as the training and upload latency for each cooperating client. Constraint \eqref{P13} specifies the deadline at every round.  
  In constraint~\eqref{P14}, the selected clients should not surpass the system bandwidth sub-channels. Last, constraint \eqref{P15} is a binary variable to determine whether the client is selected $s^k_r = 1$ or not $s^k_r = 0$. We can notice that \ref{P1} is a
Mixed-Integer Nonlinear Programming (MINLP) and NP-hard
{the} problem as the impacts of $R$ and ${\Omega}_{[R]}$ on the weight vector of each cluster model, i.e., $F({{{\boldsymbol{W}}_m}})$, should first be found.  
It is worth mentioning that it is difficult to obtain an explicit expression for $F({{{\boldsymbol{W}}_m}})$ when considering $R$ and $\Omega {[R]}$. Indeed, finding the optimal ${{{\boldsymbol{W}}_m}}$ depends on the correctness of the clustering and the corresponding congruent data distribution.   Thus,   $F({{{\boldsymbol{W}}_m}})$ is solved iteratively, as we see later in Section \ref{Sec:proposed}. Furthermore, because of the fluctuation in the computation delay,   $T^k_{cmp}$, and wireless channel conditions $h^{k}_r$   throughout the rounds $R$, obtaining the optimal client scheduling technique is complicated and might even be non-stationary. In practice, the CFL mechanism is recursive, relying on the scheduled clients to assist in capturing all incongruent data distributions and provide all clients with the congruent data distribution with the optimal model parameters. Hence, the problem in \ref{P1} is solved iteratively later on as follows.
First, to address the difficulties encountered by requiring to schedule all clients in each round to prevent biased models, we propose an efficient scheduling algorithm that accounts for the challenges mentioned above. Then, we extend the CFL algorithm, taking into account computing and communication resources and the deadline constraints. Last, we analytically find the relationship between the selected participants, the round deadline, and the convergence rate for the congruent and incongruent data distributions.  
\setlength{\textfloatsep}{1pt}

\section{Proposed Solution}
\label{Sec:proposed}
This section presents our proposed approach including both scheduling mechanisms and the related resource allocation. We take advantage of employing bandwidth reuse and devices' heterogeneity to increase the number of participating clients to capture all data distributions. 
\begin{figure}[t]
\centering
  \includegraphics[width=0.85\linewidth]{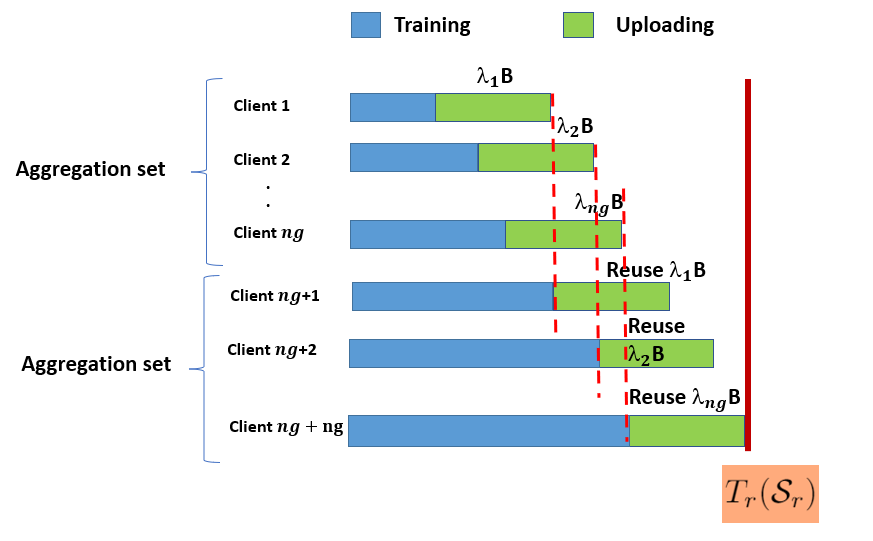}
\caption{Aggregation sets and bandwidth reuse every $r$-\textit{th} round.}
\label{fig:schamatec}
\end{figure}
In this algorithm, the scheduling and selection procedure includes two phases. In the first phase, the server copes equally with all available clients to perform the correct clustering at the early stages of the training. Thus, all existing clients are selected to carry out the global model updates until a specific cluster reaches the stopping point (i.e., all cluster members have the same data distribution). In the second phase, the server uses the greedy approach for each cluster, reaching the stopping point where the clients with less latency are only selected to carry out further model updates. Let us assume that $ \Omega_r$ includes all clients for the clusters not reaching the stopping point. The selected clients for other clusters can be added as follows:
\begin{equation}
\begin{aligned}
\label{eq:newparticipant}
  \Omega_{grdy} = {} & \{i \gets \argmin_{i \in {m}} \{T_i^{tot} \forall i \in m \\ & | \forall m| \max_{k \in m} \|\nabla_{\boldsymbol{W}} F_k({\boldsymbol{W}}^*)\|< \varepsilon_2\}\}
\end{aligned}
\end{equation}
\begin{equation}
      \Omega_r =   \Omega_r  \cup  \Omega_{grdy}.
\end{equation}
{We can see that in \eqref{eq:newparticipant}, the greedy selection is applied only to the clusters that reach the stopping point of splitting, where the stopping point is used to decide to use greedy selection for any cluster and added it to the selection set as in (18).}

{Specifically, the server initially collects prior information from all clients, such as the size of data, local processing capacity, and channel condition; subsequently, the server assesses the anticipated delays of all clients to arrange model uploading depending on their completion time. 
The server sorts the clients in ascending order according to their estimated latencies to collect the updates effectively. At this point, the number of aggregations performed by the server to gather all updates per round could be expressed as follows:
\begin{equation}
\label{eq:ng}
   ng = \frac{|{ \Omega_r}|}{N}.
\end{equation}
As shown in \Cref{fig:schamatec}, the server can exploit the heterogeneity of the devices to enable a more efficient training process where the devices with a small data size can increase their local CPU speed and transmit power to finish their computing and uploading tasks. At the same time, the other clients having large data sizes can adjust their local CPU speed and the transmit power based on the completion time of the previous aggregation set. We can infer that as the first participant in the aggregation set $j$ completes its uploading task, the first participant in the aggregation set $j+1$ will reuse the same bandwidth frequency, i.e., sub-channel, to upload its update. This procedure is {subsequently repeated} for participants in all aggregation sets.} Formally, we can define the aggregation set as follows:

\begin{footnotesize}
\begin{align}
\label{eq:gset}
     \mathcal{G}  = &\{\{1+(N(j-1)), \dots, N+(N(j-1))\}, j=1, 2, \dots, ng\}.
\end{align}
\end{footnotesize}
To find the resources for both the computation and communication tasks, each client needs to adjust its local CPU speed and the transmit power, relying on the pre-determined deadline.  
\begin{algorithm}[]
\footnotesize
\caption{: {The detailed steps of our proposed approach for efficient CFL.}}
\label{alg:proposedscheduling}
\begin{algorithmic}[1]
\STATEx \textbf{In}: $K$ , $w_0$, $\varepsilon_1, \varepsilon_2 > 0$, $E$, $b$
\STATEx \textbf{Out}: $M$ specialized models, one conventional FL model
\STATEx \textbf{Init:} Start with $\mathcal{M}=\{ \{ 1,\dots,K\} \}$ as initial clustering,  set $w_k \leftarrow w_0$ $\forall k$, and $r=1$
\WHILE {all $M$ does not reach the stopping point}
\begin{center}
\textbf{\textit{// Server Side - Before local updates start}}
\end{center}
\IF{$M > 1$ } 
\STATE Server \textbf{chooses} \\ \quad\quad $\{m \in  \mathcal{M} | \max_{k \in m} \|\nabla_{\boldsymbol{W}} F_k({\boldsymbol{W}}^*)\|<\varepsilon_2 \geq 0 \}$ 
\ELSE 
\STATE  \textbf{schedule} all participants willing to take part in the training process
\ENDIF
\STATE Server \textbf{measures} the approximate delay of $\mathcal{K}$ and sort them in ascending 
\STATE Server \textbf{organizes} the aggregation set using \eqref{eq:ng} and \eqref{eq:gset}
\STATE Server \textbf{sends} $w_{r-1}$ in multi-cast manner 
\STATEx
\begin{center}
\textbf{\textit{// Participants Simultaneously perform}}
\end{center}
\FOR {$k$ = 1 to $|{ \Omega_r}|$}
\STATE \textbf{get} $w_{r-1}$ 
\STATE \textbf{Carry out} local training $ w_k = \boldsymbol  w_{r-1} - \eta_r \sum_{t=1}^{\mathbf{\mathcal{T}}} \nabla F_k(\boldsymbol  w_k(t))$ 
\ENDFOR
\STATE Server \textbf{collects} models as in \eqref{eq:gset}
\STATEx
\begin{center}
\textbf{\textit{// Server Performs the Post-processing Steps}}
\end{center}
\STATE $\mathcal{M}_{tmp} \leftarrow \mathcal{M}$
\STATE Server \textbf{finds} $F({w^r}) =\sum_{k=1}^{K}\frac{D_k}{D}{ F_k({w^r})}$ and ${w^r} =\sum_{k=1}^{K}\frac{D_k}{D}{ {w^r}}$

\FOR {$c$ $\in$ $\mathcal{M}$}
\STATE \textbullet~ Server \textbf{obtains} $\Delta W_m \leftarrow \frac{1}{|c|}\sum_{k\in c}\Delta w_k$ 
\IF {$\|\Delta w_c\|<\varepsilon_1$  \textbf{and} $\max_{k\in c}\|\Delta w_k\|>\varepsilon_2$}
\STATE \textbullet~ $ sim_{k,{k'}}\leftarrow\frac{\langle \Delta w_k,\Delta w_{k'}\rangle}{\|\Delta w_k\|\|\Delta w_{k'}\|}$
\STATE \textbullet~ $c_1,c_2\leftarrow \arg\min_{c_1\cup c_2 = c}(\max_{k\in c_1, k'\in c_2}  sim_{k,k'})$
\STATE \textbullet~ $ sim_{cross}^{max} \leftarrow \max_{k\in c_1, {k'}\in c_2} sim_{k,k'}$
\STATE \textbullet~ $\gamma_k:=\frac{\|\nabla F_{I(k)}(\boldsymbol{W}^*)-\nabla F_k(\boldsymbol{W}^*)\|}{\|\nabla F_{I(k)}(\boldsymbol{W}^*)\|}$
\IF{$max(\gamma_k) < \sqrt{\frac{1- sim_{cross}^{max}}{2}}$}
\STATE \textbullet~ $\mathcal{M}_{tmp}\leftarrow(\mathcal{M}_{tmp}\setminus c) \cup c_1\cup c_2$
\ENDIF
\ENDIF
\ENDFOR

\STATE \textbullet~ $\mathcal{M} \leftarrow \mathcal{M}_{tmp}$
\STATE $r = r+1$
\ENDWHILE
\STATE \textbf{Server \textit{Return}} $M$ specialized models, and one conventional FL model.
\end{algorithmic}
\end{algorithm}
{It is worth highlighting that the server is anticipated to have considerably more computing capabilities than the clients; therefore, the computation cost on the server side is negligible.
We can see that there is a trade-off in the scheduling problem between the number of participants chosen in a given round and the entire number of rounds, which could be handled by $T_r$. Assuming we started with a short $T_{r}$, the number of engaging participants in $ \Omega_r$ can decrease, which is unsuitable for CFL leading to slowly partitioning the clients into congruent groups. This brings out unwanted communication costs due to increasing the number of training rounds needed to converge. Hence, we propose to initiate a long $T_{r}$ at the beginning of the training; then, this time is reduced as the correct clustering is performed where only the best clients that require less latency will be selected from the cluster reaching the stopping point as follows:
\begin{align}
\label{eq:stopping}
\max_{k\in m} \|\nabla_{\boldsymbol{W}} F_k({\boldsymbol{W}}^*)\| < \varepsilon_2.
\end{align}
The server uses \eqref{eq:stopping} to check if all clients within cluster $m$ have congruent data distribution. If so, the optimal solution $\boldsymbol{W}^*$ is returned, and the CFL is stopped for that group. Otherwise, the server partitions and clusters the clients using  \eqref{eq:cossim}, \eqref{eq:servernorm}, and \eqref{eq:clientnorm} at every round. In this context, the similarities between cooperating participants $k$ and $k'$ within a group could be bounded as follows: 
\begin{align}
\label{eq:within}
 sim_{within}^{min} = \min_{\underset{I(k)=I({k'})}{k,{k'}}} sim(\nabla_w r_k(\boldsymbol{W}^*), \nabla_w r_{k'}(\boldsymbol{W}^*)),
\end{align}
At the same time, we can bound the similarities between participants $k$ and ${k'}$ belonging to two different clusters as follows:
\begin{align}
\label{eq:cross}
 sim_{cross}^{max}=\max_{k\in c_1^*,{k'}\in c_2^*} sim(\nabla_{\boldsymbol{W}} r_k(\boldsymbol{W}^*), \nabla_{\boldsymbol{W}} r_{k'}(\boldsymbol{W}^*)).
\end{align} 
It is worth highlighting that \eqref{eq:within} and \eqref{eq:cross} are employed to measure the separation gap which can be expressed as follow:
\begin{align}
g( sim):= sim_{intra}^{min}- sim_{cross}^{max}. 
\end{align} 
All steps of our proposed framework, including the client scheduling algorithm and the CFL training algorithm, are listed in Algorithm \ref{alg:proposedscheduling} where the algorithm has input parameters $K$, $\varepsilon_1, \varepsilon_2 > 0$, $E$, and $b$. Algorithm 1 starts with an initialization step by clustering all clients in one group. Then, the server aggregates prior information such as the data size, channel state, and computational capabilities from all existing clients (line 2). In lines 3-4, the algorithm checks whether the clustering was performed before or not. If achieved, the server checks the stopping condition \eqref{eq:stopping}, and selects the best clients having less latency to take part in the training from that group, while all clients are selected for others. Otherwise, the server determines all clients to participate in the FEEL round. } In Algorithm.~\ref{alg:proposedscheduling}, we observe that whenever a given cluster approaches the stationary point of its model (lines 2–7), the server shifts to employing a greedy selection algorithm in which the clients with less training time and better resources are selected to conduct subsequent updates. This is due to the fact that clients with similar data distribution (i.e., congruent data distribution) behave likewise. For each cluster not reaching a stationary point, the server estimates all clients' latency (line 8) to schedule uploading the local models' updates and find the aggregation set (line 9) due to the limited bandwidth. In line 10, the server broadcasts the latest updated models to the selected clients. In lines (11–14), all chosen clients perform conventional FEEL updates and return them to the coordinating, which collects (line 15) all updates based on the aggregation sets. In lines 16-29, the server runs the clustering algorithm to cluster the participants according to their local data distribution if the model of the last cluster reaches the stationary points. Otherwise, the server will keep using the traditional FEEL algorithm. All these steps are repeated until all models reach the stationary points and all incongruent data distributions are captured as in Algorithm.~\ref{alg:proposedscheduling}. {It is worth mentioning that since the server depends on the uploaded models, clustering the participants isn't affected by the mobility of the devices as long as they are within the server's coverage area.} 
\section{Analysis of the Convergence Under the Proposed Algorithm}
\label{sec:conv_analysis}
This section analyzes the relationship between our proposed approach and the convergence rate of the specialized models that optimally fit $M$ incongruent data distributions.
To begin with, let us define ${\boldsymbol{W}}^{*}_m = \mathrm{arg} \min_{\boldsymbol{W}} F({\boldsymbol{W}}_m^*)$ as an optimal parameters of the specialized model for cluster $m$ that is related to the minimum loss $F(\boldsymbol{W}_m^{*})$ across all group's devices. We also define  $\boldsymbol{W}^{*}_k = \mathrm{arg} \min_{\boldsymbol{W}_k} F({\boldsymbol{W}}_{k}^{*})$ as an optimal local model parameters at client $k$. Accordingly, the optimal gap between the global of a cluster $m$ and local loss can be expressed as:
\begin{align}\label{GammaDef}
g(F_m) \triangleq F({\boldsymbol{W}}_m^*) - \frac{1}{{|\Omega^m_r|}} \sum\limits_{k \in [m]}^{{|\Omega^m_r|}} F^*_k,    
\end{align}
where $g(F_m) \geq 0$ indicates that the cluster's model $\boldsymbol{W_m}$ does not reach the optimal solution yet. For a given cluster when reaching the optimal solution, $g(F_m)$ approaches zero. 

To begin with, we apply the commonly used assumptions in FEEL \cite{wang2019adaptive,tran2019federated,amiri2021convergence}  listed as follow: 
\begin{assumption}\label{AssumpSmoothLoss}
Local loss functions $F_1, \dots, F_K$ amongst clients are all $\beta$-smooth; that is, $\forall \boldsymbol{W'}, {\boldsymbol{W}} \in \mathbb{R}^d$, 
\begin{equation}\label{ConvLSmoothCondit}
\begin{aligned}
F_k(\boldsymbol{W'}) - F_k({\boldsymbol{W}}) \le \langle \boldsymbol{W'} - {\boldsymbol{W}} , \nabla F_k ({\boldsymbol{W}}) \rangle + \frac{\beta}{2} \left\| \boldsymbol{W'} - {\boldsymbol{W}} \right\|^2_2, \\
~~~~~~~~~~~~~ \quad \forall k \in \mathcal{K}.
\end{aligned}
\end{equation}
\end{assumption}

\begin{assumption}\label{AssumpStrongConvexLoss}
Local loss functions $F_1, \dots, F_K$ amongst clients are all $\alpha$-strongly convex; that is, $\forall \boldsymbol{W'}, {\boldsymbol{W}} \in \mathbb{R}^d$, 
\begin{equation}
\begin{aligned}
\label{ConvMuStConvexCondit}
F_k(\boldsymbol{W'}) - F_k({\boldsymbol{W}}) \ge \langle \boldsymbol{W'} - {\boldsymbol{W}} , \nabla F_k ({\boldsymbol{W}}) \rangle + \frac{\alpha}{2} \left\| \boldsymbol{W'} - {\boldsymbol{W}} \right\|^2_2, \\
~~~~~~~~~~~~~\quad \forall k \in \mathcal{K}.      
\end{aligned}
\end{equation}
\end{assumption}

\begin{assumption}\label{AssumpBoundedVarGradient}
The expected squared $l_2$-norm of the local gradients is bounded as \cite{amiri2021convergence}:
\begin{align}\label{ConvNorm2Bound}
\mathbb{E}_{\mathfrak{D}} \left [ \left\| \nabla F_k \left( \boldsymbol{{W}}_k (r), \mathfrak{D}_k (r) \right) \right\|^2_2 \right] \le \varrho^2, \quad \forall k \in \mathcal{K}, \; \forall r.      
\end{align}
\end{assumption}
\noindent where $\mathfrak{D}_k (r)$ is a mini-batch sample at each local iteration. {It is worth mentioning that all those assumptions, when combined, allow for the analysis of the learning algorithm's convergence rate, which is an important measure of its efficiency as seen next in \Cref{convergence_detials}. They also aid in determining the conditions under which the proposed algorithm is guaranteed to reach a solution.}

\subsection{Convergence Details}
\label{convergence_detials}
As seen in Section \ref{Sec:proposed}, our proposed approach has two scheduling phases. First, all active and available clients have equal chances to take part in the global model training task. Second, for a particular cluster that reaches the stopping point, only one unique participant providing the lowest latency is selected to perform further updates for the FEEL model. Hence, for the first phase, the probability of selecting any client in every r-\textit{th} round is $\frac{{|\Omega^m_r|}}{K} = 1$. 
Now, we find the relationship between the participating clients, the number of epochs and data samples, and the convergence rate. We follow the steps similar to \cite{amiri2021convergence}. However, in our work, we consider the CFL as model training and the mini-batch SGD as a local solver where the number of updates performed locally is proportional to the number of data samples. It is worth noting that we consider  unbalanced data distribution where the number of examples (i.e., data points) is randomly distributed following the power law. Consequently, the number of local iterations can be calculated as in \eqref{local_updates}.

\begin{theorem}\label{th:1}
Given a learning rate $\eta_r = \frac{1}{\alpha \mathcal{T}} $, $\forall r$, we have:
\begin{subequations}\label{Convth:1}
\begin{equation}
\begin{aligned}\label{Convth:1_main}
\mathbb{E} \left[ \left\| \boldsymbol{{W}} (r) - {\boldsymbol{{W}}}^* \right\|_2^2 \right] \le  \Bigg( \prod_{t=0}^{r-1} \zeta_1(t) \Bigg) \left\| {\boldsymbol{{W}}}_0 - {\boldsymbol{{W}}}^* \right\|_2^2  \\
~~~~~~~~~~~~~ + \sum_{t'=0}^{r-1} \zeta_2(t') \prod_{t=t'+1}^{r-1} \zeta_1(t),  
\end{aligned}
\end{equation}
where 
\begin{align}\label{Convth:1_AB}
\zeta_1(t) \triangleq & 1 - \alpha \eta_t \left( \mathcal{T} - \eta_t (\mathcal{T} - 1) \right),\\ 
\zeta_2(t) \triangleq & \left( 1+ \alpha (1- \eta \zeta_1(t)) \right) \eta^2(t) \nonumber \\
& \varrho^2 \frac{\mathcal{T} (\mathcal{T}-1)(2\mathcal{T}-1)}{6} + \eta^2(t) (\mathcal{T}^2 + \mathcal{T}-1) \varrho^2  \nonumber \\ & + 2 \eta \zeta_1(t) (\mathcal{T} - 1) \mathfrak{F}. 
\end{align}

\end{subequations}
\end{theorem}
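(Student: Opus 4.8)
The plan is to track the evolution of the squared distance $\left\| \boldsymbol{W}(r) - \boldsymbol{W}^* \right\|_2^2$ across one global round and then unroll the resulting one-step recursion. Since the proposed first-phase scheduling selects every client each round (the selection probability is $\tfrac{|\Omega^m_r|}{K}=1$, as noted just before the theorem), the global update is simply the average of the full local trajectories, so I can work directly with the aggregated iterate without having to handle sampling noise from partial participation. First I would write one local step of mini-batch SGD, $\boldsymbol{W}_k(t+1) = \boldsymbol{W}_k(t) - \eta_r \nabla F_k(\boldsymbol{W}_k(t), \mathfrak{D}_k(t))$, expand $\left\| \boldsymbol{W}_k(t+1) - \boldsymbol{W}^* \right\|_2^2$, and take expectation over the mini-batch, isolating the inner-product (descent) term and the gradient-norm term.

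Next I would bound the descent term using $\alpha$-strong convexity (Assumption~\ref{AssumpStrongConvexLoss}) to pull out a contraction factor proportional to $\alpha \eta_r$, and bound the stochastic gradient second moment using Assumption~\ref{AssumpBoundedVarGradient}, which caps it by $\varrho^2$. The key bookkeeping is that each participant performs $\mathcal{T}_k = E D_k / b$ local updates rather than a single one, so I must sum the per-step bounds over $t = 1, \dots, \mathcal{T}$ (writing $\mathcal{T}$ for the relevant count). This is where the arithmetic-series and sum-of-squares terms $\mathcal{T}(\mathcal{T}-1)/2$, $\mathcal{T}(\mathcal{T}-1)(2\mathcal{T}-1)/6$, etc., enter: accumulating the contraction over $\mathcal{T}$ local steps gives the factor $\zeta_1(t) = 1 - \alpha \eta_t(\mathcal{T} - \eta_t(\mathcal{T}-1))$, while accumulating the noise and drift terms (including a term involving the optimality gap $\mathfrak{F}$, presumably the analogue of $g(F_m)$ in~\eqref{GammaDef}) produces $\zeta_2(t)$. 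I would also need a short lemma controlling the deviation $\left\| \boldsymbol{W}_k(t) - \boldsymbol{W}^* \right\|$ within a round so that the cross terms can be folded into the $\zeta_2$ expression; this uses $\beta$-smoothness (Assumption~\ref{AssumpSmoothLoss}) together with the learning-rate choice $\eta_r = \tfrac{1}{\alpha \mathcal{T}}$ to keep the within-round drift bounded.

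Having established the one-round recursion $\mathbb{E}\left[ \left\| \boldsymbol{W}(r+1) - \boldsymbol{W}^* \right\|_2^2 \right] \le \zeta_1(r)\, \mathbb{E}\left[ \left\| \boldsymbol{W}(r) - \boldsymbol{W}^* \right\|_2^2 \right] + \zeta_2(r)$, the final step is a routine induction on $r$: unrolling from $r=0$ gives the product $\prod_{t=0}^{r-1}\zeta_1(t)$ multiplying the initial gap $\left\| \boldsymbol{W}_0 - \boldsymbol{W}^* \right\|_2^2$, plus the sum $\sum_{t'=0}^{r-1}\zeta_2(t')\prod_{t=t'+1}^{r-1}\zeta_1(t)$ of accumulated noise, matching~\eqref{Convth:1_main}.

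The main obstacle I anticipate is the within-round analysis: because each client runs $\mathcal{T}$ local SGD steps on its own (non-i.i.d.) data before aggregation, the local iterates $\boldsymbol{W}_k(t)$ drift away from both the global iterate and $\boldsymbol{W}^*$, and controlling the accumulation of this drift — while correctly separating the part that contracts from the part that behaves like additive variance, and getting the exact polynomial-in-$\mathcal{T}$ coefficients — is the delicate step. The choice $\eta_r = \tfrac{1}{\alpha\mathcal{T}}$ is precisely what makes $\zeta_1(t) < 1$ and keeps these terms summable, so I would verify that bound carefully; everything after the one-round recursion is mechanical.
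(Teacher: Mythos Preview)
Your proposal is correct and lands on the same one-round recursion followed by the same unrolling as the paper, but the route differs in one structural respect worth noting. The paper's proof introduces an auxiliary full-participation iterate $\boldsymbol{W'}(r+1) = \boldsymbol{W}(r) + \tfrac{1}{K}\sum_{k=1}^K \widehat{\boldsymbol{\mathcal{L}}}_k(r)$ and decomposes $\|\boldsymbol{W}(r+1)-\boldsymbol{W}^*\|_2^2$ via $\boldsymbol{W'}(r+1)$ into three pieces, then invokes two lemmas to show the scheduling-induced term $\|\boldsymbol{W}(r+1)-\boldsymbol{W'}(r+1)\|_2^2$ is essentially zero and the cross term vanishes by unbiasedness; only after that does it state the recursion with $\zeta_1,\zeta_2$ (largely by reference to~\cite{amiri2021convergence}, without re-deriving the polynomial-in-$\mathcal{T}$ coefficients). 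You instead recognize up front that full participation makes $\boldsymbol{W}(r+1)=\boldsymbol{W'}(r+1)$ and go straight to the within-round SGD analysis. Your route is more self-contained about \emph{where} the $\mathcal{T}(\mathcal{T}-1)(2\mathcal{T}-1)/6$ and $(\mathcal{T}^2+\mathcal{T}-1)$ factors come from, while the paper's auxiliary-variable decomposition is the template that would generalize if participation were genuinely partial; under the present fairness scheduling both collapse to the same recursion, and your final induction step matches the paper exactly.
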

\begin{proof}
See Appendix \ref{AppProofth:1}. 
\end{proof}
\begin{corollary}\label{CorrConvF_FstarKequalM}
From the $\beta$-smoothness of the loss function $F(\cdot)$, after $R$ global rounds, for each cluster's model,  we have:
\begin{align}\label{ConvF_FstarKequalM}
\mathbb{E} \left[ F( \boldsymbol{{W}} (R)) \right] - F^* \le & \frac{\beta}{2} \mathbb{E} \left[ \left\| \boldsymbol{{W}} (R) - {\boldsymbol{{W}}}^* \right\|_2^2 \right] \nonumber \\
\le & \frac{\beta}{2} \prod_{t=0}^{R-1} \zeta_1(t) \left\| {\boldsymbol{{W}}} _0 - {\boldsymbol{{W}}}^* \right\|_2^2  \nonumber \\ & + \frac{\beta}{2} \sum_{t'=0}^{R-1} \zeta_2(t') \prod_{t=t'+1}^{R-1} \zeta_1(t). 
\end{align}
\end{corollary}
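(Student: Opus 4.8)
The plan is to obtain \eqref{ConvF_FstarKequalM} by chaining two ingredients that are already in place: the quadratic upper bound on the optimality gap that $\beta$-smoothness provides, and the recursive parameter-error bound established in \Cref{th:1}. Since the corollary is asserted ``for each cluster's model,'' I read $\boldsymbol{W}(R)$ as the aggregated model of a fixed cluster $m$, $\boldsymbol{W}^*$ as $\boldsymbol{W}_m^*$, and $F^*$ as $F(\boldsymbol{W}_m^*)$; the argument below is identical for every cluster (and for the root conventional-FL model), so I suppress the subscript $m$.

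First I would record that the aggregate loss $F(\boldsymbol{W}) = \sum_{k=1}^{K}\frac{D_k}{D}F_k(\boldsymbol{W})$ of \eqref{eq:risk} is a convex combination of the functions $F_1,\dots,F_K$, hence inherits their $\beta$-smoothness from \Cref{AssumpSmoothLoss} and their $\alpha$-strong convexity from \Cref{AssumpStrongConvexLoss}. Strong convexity over all of $\mathbb{R}^d$ forces the minimizer $\boldsymbol{W}^*$ to be unique and to satisfy the first-order stationarity condition $\nabla F(\boldsymbol{W}^*) = 0$.

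Next I would instantiate the smoothness inequality \eqref{ConvLSmoothCondit} (applied to $F$ rather than a single $F_k$) at $\boldsymbol{W}' = \boldsymbol{W}(R)$ and $\boldsymbol{W} = \boldsymbol{W}^*$. The cross term $\langle \boldsymbol{W}(R) - \boldsymbol{W}^*, \nabla F(\boldsymbol{W}^*)\rangle$ vanishes by stationarity, leaving the deterministic bound $F(\boldsymbol{W}(R)) - F^* \le \frac{\beta}{2}\|\boldsymbol{W}(R) - \boldsymbol{W}^*\|_2^2$. Taking expectations over the mini-batch sampling randomness $\mathfrak{D}$ preserves the inequality and yields the first line of \eqref{ConvF_FstarKequalM}. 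Substituting the bound of \Cref{th:1}, namely \eqref{Convth:1_main} with $r = R$, into the right-hand side gives exactly $\frac{\beta}{2}\prod_{t=0}^{R-1}\zeta_1(t)\|\boldsymbol{W}_0 - \boldsymbol{W}^*\|_2^2 + \frac{\beta}{2}\sum_{t'=0}^{R-1}\zeta_2(t')\prod_{t=t'+1}^{R-1}\zeta_1(t)$, which is the second line, and the proof is complete.

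There is no genuinely hard step here: the corollary is a one-line consequence of \Cref{th:1} together with $\beta$-smoothness (\Cref{AssumpBoundedVarGradient} having already been consumed inside the proof of \Cref{th:1}). The only points that warrant a sentence of care are (i) justifying $\nabla F(\boldsymbol{W}^*) = 0$, for which one must note that the minimization in \eqref{eq:risk} is unconstrained and $F$ is strongly convex, so $\boldsymbol{W}^*$ is an interior stationary point rather than a boundary point; and (ii) being explicit that the smoothness and strong-convexity properties transfer from the per-client losses to their weighted average $F$ before \eqref{ConvLSmoothCondit} is invoked at the aggregate level.
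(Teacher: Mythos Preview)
Your proposal is correct and matches the paper's own argument: the paper simply remarks that the first inequality is the standard $\beta$-smoothness bound and that the second inequality ``results from Theorem~\ref{th:1}.'' Your write-up merely fills in the routine justifications (smoothness of the aggregate $F$, vanishing of $\nabla F(\boldsymbol{W}^*)$) that the paper leaves implicit.
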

We can notice that the last inequality results from Theorem \ref{th:1} \cite{amiri2021convergence}. If the learning rate is continuously decreasing, $\mathop {\lim }\limits_{r \to \infty } \eta_r = 0$, and we can simply infer that $\mathop {\lim }\limits_{R \to \infty } \mathbb{E} \left[ F( \boldsymbol{{W}} (R)) \right] - F^* = 0$.
\begin{remark}
 For the client scheduling in our proposed approach, the client scheduling is not random, which brings out the advantage of capturing the data distribution amongst clients and avoiding unbiased models.
\end{remark}
\begin{remark}
For the dominant cluster that has more similarities, the average loss within its group depends on many other parameters such as $\beta$, $\alpha$, $K$, ${|\Omega^m_r|}$, $\varepsilon_1$ and $\varepsilon_2$.
\end{remark}

\section{Numerical Experiments}
\label{sec:experiment}
In this section, we perform extensive simulations to evaluate the proposed algorithms. The system models described in Section \ref{sysmodel} are properly considered with the following experiment details.

\subsection{Experimental Setup}
In all experiments, we use a bandwidth of $B=10$ MHz, with each sub-channel having a bandwidth of $1$ MHz. The channel gain of each device, $h^k_r$, is randomly modeled with a path loss ($\alpha = g_0(\frac{d_0}{d})^4$) , where $g_0 = -35 $ dB and the baseline distance $d_0 = 2$ m. We assume that the distances between the devices and the coordinating server are uniformly distributed between $20$ and $100$ m. In addition, the power of AWGN is defined as $N_0=10^{-6}$watts. The transmission power $P^{k}_r$ is distributed at random between  $p_{min} = -10$ dBm and $p_{max} = 20$ dBm. The device's CPU frequency $f_k$ is generated randomly between $1$ GHz and $9$ GHz, and the required CPU cycles per data point, $\phi$, is set to $20$ and it is assumed to be homogeneous for all devices. We use two federated datasets, FEMNIST and CIFAR-10~\cite{caldas2018leaf}, for handwriting classification and object recognition, respectively. Specifically, FEMNIST is utilized for handwriting classifications of both letters and digits (A-Z, a-z, and 0-9), and it has 305,654, 28 x 28, images while CIFAR-10 consists of 60,000 32x32 colored images. We split both datasets for each device into 80\% for training and 2\%  for testing. {Further, we utilize both datasets in a non-i.i.d. manner, whereas we split each dataset into $\mathcal{I}$ fragments, and then we assign each device only two random classes.} For the model architecture, the convolutional neural network (CNN) classifier is adopted for FEMNIST, and Alexnet—a deep neural network (DNN) is adopted for CIFAR-10. {In our models, we employed 2 hidden layers in the FEMNIST model and 13 hidden layers in the CIFAR-10 model. For both learning tasks, we utilized the ReLU activation function for the hidden layers and the Softmax activation function for the output layer.} To simplify the presentation, we list all simulation parameters in Table \ref{tab:Sim_param}.
\begin{table}[htbp]
 \footnotesize
 \centering
 \caption{{Simulation parameters}}
 \label{tab:Sim_param}
 \begin{tabular}{c|c|p{2.7cm}}
 \hline
 \thead{\textbf{\textit{Sym.}}} & \thead{\textbf{\textit{Parameter}}} & \thead{\textbf{\textit{Value(s)}}}
 \\ \hline
 $K$ & \# of participating devices & 
 (20, 50, 100, 200)
 \\ \hline
 $R$ & \# of training rounds & 200 for FEMNIST and 500 for CIFAR-10 
 \\ \hline
 $E$ & \# of local training epochs & 10
 \\ \hline
 $b$ & Batch size & 32
 \\ \hline
 $\eta$ & Learning rate & 0.01, 0.009\\
 \hline
 $B$ & Bandwidth & 10 MHz \\
 \hline
 $P_{min}$ & Minimum transmission power & -10 dBm \\ \hline
  $P_{max}$ & Maximum transmission power & 20 dBm \\ \hline
   $N_0$ & Background noise & -10 dBm \\ \hline
    $P_{k}^{min}$ & Minimum CPU frequency & 1 GHz \\ \hline
  $P_{k}^{max}$ & Maximum CPU frequency & 9 GHz \\ \hline
 \end{tabular}
\end{table}

\subsection{Benchmarks}
In this paper, we evaluate our proposed algorithms against the following state-of-the-art algorithms: 
\begin{itemize}
\item \textbf{Random approach} \cite{FedGroupDuan,sattler2020clustered}: In this approach,
the clients are scheduled randomly in each round regardless of the number of local samples, the computation and communication latencies, or the quality of the update.
 \item \textbf{Best channel scheduling}\cite{amiri2021convergence}: This approach aims to select the clients with the best channel states without taking into account the impacts of the local updates on the global model.
\item \textbf{Best local updates}\cite{amiri2021convergence}: This approach seeks to determine the participating devices with the maximum L2-norm where the participants first calculate their gradients, then find the Euclidean distance between the local parameters and the parameters of the last received model from the server. After that, they notify the server with their status, which in turn keeps the connection with the participants' having maximum L2-norm.
 \item \textbf{Maximum number of data points}: In this algorithm, the coordinating server chooses the participants having the largest data sizes and keeps connecting with them until the end of the training. We also extend this approach to select the maximum number of data examples amongst clients every round, assuming that the channel of each participant is not stable and diverse participants may be involved in different rounds. 
\end{itemize}
\subsection{Numerical Results}
To evaluate the feasibility and performance of our proposed approach, we present and compare the findings against the benchmark algorithms. To ensure a fair comparison, we utilize similar experimental setups and the hyper-parameters for all scheduling approaches regarding the number of local epochs, model structure, learning rate, and data distribution (i.e., we employed non-i.i.d. and unbalanced data distribution for both). For all experiments, we adjust $R$, the global training rounds, to 200 and 500 for FEMNIST and CIFAR-10, respectively. We also set $E = 10$, and the batch-size $b= 32$. First, we carried out the experiments and reported the results during each global round. Then, we test the{resulting} models after finishing the training process to showcase the effectiveness of the proposed approach to equipping each device with the best-fitting model. It is important to emphasize that all findings are averaged over five {trials}. 
\begin{figure*}[htbp]
\centering
         \begin{subfigure}[b]{0.6\textwidth}
         \centering
         \includegraphics[width=\textwidth]{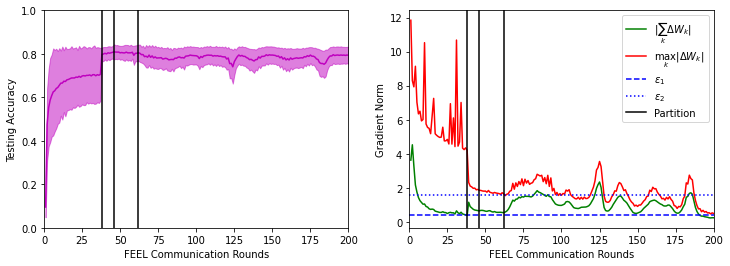}
         \caption{Proposed.}
         \label{F:a}
     \end{subfigure}
     \hfill
         \begin{subfigure}[b]{0.6\textwidth}
         \centering
         \includegraphics[width=\textwidth]{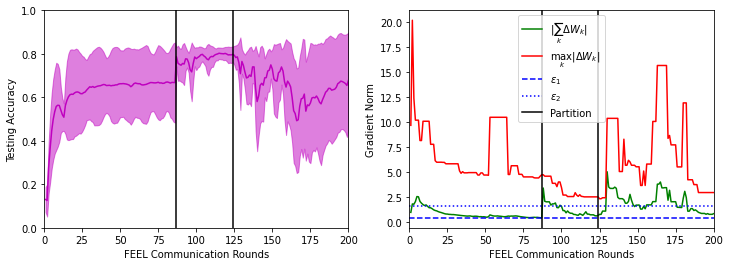}
         \caption{Baseline 1.}
         \label{F:b}
     \end{subfigure}
     \hfill     
    \caption{Average testing accuracy of the clusters'
models during the global training rounds for both our proposed approach, (a: left), and the benchmarks (b: left). In (a: right and b: right) the gradient norm of global and local loss functions during the global training rounds for both the proposed algorithms and benchmarks, respectively.}
\label{F:acc}
\end{figure*}
\subsubsection{Performance Gain on Clustering Speed and Convergence Rate}
To evaluate the proposed algorithms on the splitting and correct clustering, we start conducting the experiments using FEMNIST. We use the random selection as a baseline in this stage as it shows the best performance among benchmarks.  
\Cref{F:acc} exhibits the performance of the proposed algorithm and the random selection algorithm in terms of convergence rate and clustering speed. We use the bordered accuracy with the confidence threshold at each round (\Cref{F:a,F:b}, left) and the gradients' norm among all participants to showcase the acceleration of the convergence rate (\Cref{F:a,F:b}, right). From both figures, it is clear to note that our proposed approach (fairness followed by greedy approach) gains a much faster convergence rate regarding the speed of clustering since the partitioning starts at the training round of index 37, whereas the partitioning using the random scheduling algorithm starts at the training round of index 83. This confirms that the acceleration rate of the proposed algorithm is 2x faster than the benchmarks.  It is noted that in our approach, the partitioning actually takes place again for all participants with incongruent distributions at global training rounds of index 45 and 63, respectively until no further clustering is possible. 
All trained models achieve the stopping criteria as in (21) at round 190, proving that clustering is correctly performed, according to \Cref{F:a} right.
On the other hand, the baseline algorithm, \Cref{F:b} right, shows that the CFL still requires many more rounds to reach the stopping point at round 200.
This is owing to the random client scheduling behavior, which may choose clients who have previously been grouped, bringing out the need for many more rounds to capture the data amongst clients to perform the correct clustering. For example, the clustering becomes slow if clients with different data distributions are randomly selected at the latter rounds. 

\begin{table*}[htbp]
\caption{Models' testing accuracy after finishing all global training rounds: a conventional FL model and the tailored models of all groups (the proposed and benchmark algorithms), P*=Participant, M*=Model. (FEMNIST)}
\label{basetable}
 \begin{subtable}{\linewidth}
 \footnotesize
 \centering
 \caption{Proposed Approach.}
   \label{table_proposedsetmodel}
\begin{tabular}{|l|l|l|l|l|l|l|l|l|l|l|l|l|l|l|l|l|l|l|l|l}
 \hline
  & \textbf{P 1} & \textbf{P 2} & \textbf{P 3} & \textbf{P 4} & \textbf{P 5} & \textbf{P 6} & \textbf{P 7} & \textbf{P 8} & \textbf{P 9} & \textbf{P 10} & \textbf{P 11} & \textbf{P 12} & \textbf{P 13} & \textbf{P 14} & \textbf{P 15} \\ \hline
{\textbf{\textit{Conventional FL M}}} & 45.8    & 38   & 46   & 45   & 36.9   & 76    & 81   & 82.9   & 78.7    & 78.9    & 76.95   & 77    & 78   & 77    & 76         \\ \hline
{\textbf{\textit{M 1}}}  & 0        & 0        & 0        & 0        & 0        & \cellcolor{green!40}\textbf{77.1}   & 83.3   & \cellcolor{green!40}\textbf{86}   & \cellcolor{green!40}\textbf{81.6}   & 82    & \cellcolor{green!40}\textbf{81.8}    & \cellcolor{green!40}\textbf{81.5}    & 82.5    & 79.6     & 80.9    \\ \hline
{\textbf{\textit{M 2}}}  & 0        & 0        & \cellcolor{green!40}\textbf{81.6}   & \cellcolor{green!40}\textbf{76}   & 77.6   & 0        & \cellcolor{green!40}\textbf{83.8}   & 85.5   & 0        & \cellcolor{green!40}\textbf{82.7}    & 80.1    & 0         & \cellcolor{green!40}\textbf{84.5}   & \cellcolor{green!40}\textbf{82}   & \cellcolor{green!40}\textbf{81.8}           \\ \hline
{\textbf{\textit{M 3}}}  & \cellcolor{green!40}\textbf{83.3}   & 0        & 77   & 67   & \cellcolor{green!40}\textbf{77.9}   & 76.8   & 0        & 0        & 0        & \cellcolor{green!40}\textbf{82.7}    & 0         & 0         & 0         & 0         & 0                 \\ \hline
{\textbf{\textit{M 4}}}  & 0        & 74.5   & 0        & 0        & 0        & 0        & 82.5   & 83   & 75.2   & 0         & 0         & 0         & 0         & 0         & 0                 \\ \hline
{\textbf{\textit{M 5}}}  & \cellcolor{green!40}\textbf{83.3}  & \cellcolor{green!40}\textbf{78.6}   & 0        & 0        & 0        & 76.8   & 0        & 0        & 75.2   & 0         & 0         & 76.8    & 0         & 0         & 0               \\ \hline
{\textbf{\textit{M 6}}}  & \cellcolor{green!40}\textbf{83.3}   & 74.5   & 76.9   & 67.2   & \cellcolor{green!40}\textbf{77.9}   & 0        & 0        & 0        & 0        & 0         & 0         & 0         & 0         & 0         & 0          \\ \hline 
\textit{\textbf{Max Acc}}  & \textbf{83.3}              & \textbf{78.6}              & \textbf{81.6}              & \textbf{76}              & \textbf{77.9}              & \textbf{77.1}                         & \textbf{83.8}                          & \textbf{86}                          & \textbf{81.6}                 &\textbf{82.7}           &\textbf{81.8}  & \textbf{81.5}                            & \textbf{84.5}                             & \textbf{82}                             & \textbf{81.8}                    \\ \hline 
\end{tabular}
\end{subtable} 

 \begin{subtable}{\linewidth}
 \footnotesize
 \centering
  \caption{Benchmark 1 (Random Scheduling Approach).}
  \label{table_baselinesetmodel}
\begin{tabular}{|l|l|l|l|l|l|l|l|l|l|l|l|l|l|l|l|l|l|l|l|l}
\hline
                  & {\textbf{P 1}} & {\textbf{P 2}} & {\textbf{P 3}} & {\textbf{P 4}} & {\textbf{P 5}} & {\textbf{P 6}} & {\textbf{P 7}} & {\textbf{P 8}} & {\textbf{P 9}} & {\textbf{P 10}} & {\textbf{P 11}} & {\textbf{P 12}} & {\textbf{P 13}} & {\textbf{P 14}} & {\textbf{P 15}} \\ \hline 
\textit{\textbf{Conventional FL M}} & 50.7                       & 41.6                       & 49.2                       & 46.4                       & 41.5                       & 73.1                       & 80.1                       & 82.1                       & 76.5                       & 77.8                        & 77                          & 77.3                        & 76.7                        & 76.1                        & 77.7                         \\ \hline 
\textit{\textbf{M 1}}  & 0                          & 0                          & 0                          & 0                          & 0                          & 78.5                       & \cellcolor{green!40}\textbf{84.5}              & \cellcolor{green!40}\textbf{86.5}              & \cellcolor{green!40}\textbf{82.1}              & \cellcolor{green!40}\textbf{81.4}               & 81.5                        & \cellcolor{green!40}\textbf{79.9}               & \cellcolor{green!40}\textbf{81}                 & \cellcolor{green!40}\textbf{79.9}               & \cellcolor{green!40}\textbf{81.1}              \\ \hline 
\textit{\textbf{M 2}}  & 0                          & 0                          & 0                          & \cellcolor{red!40}\textbf{52.4}              & \cellcolor{red!40}\textbf{56.5}              & 0                          & 78.3                       & 75.8                       & 77.9                       & 72.7                        & 0                           & 0                           & 78.7                        & 78.2                        & 67.4                        \\ \hline 
\textit{\textbf{M 3}}  & \cellcolor{red!40}\textbf{59.7}              & \cellcolor{red!40}\textbf{61.1}              & \cellcolor{red!40}\textbf{62.5}              & 0                          & 0                          & \cellcolor{green!40}\textbf{82}                & 0                          & 0                          & 0                          & 0                           & \cellcolor{green!40}\textbf{82.8}               & 67.9                        & 0                           & 0                           & 0                           \\ \hline 
\textit{\textbf{M 4}}  & \cellcolor{red!40}\textbf{59.7}              & \cellcolor{red!40}\textbf{61.1}              & \cellcolor{red!40}\textbf{62.5}              & \cellcolor{red!40}\textbf{52.4}              & \cellcolor{red!40}\textbf{56.5}              & 0                          & 0                          & 0                          & 0                          & 0                           & 0                           & 0                           & 0                           & 0                           & 0                          \\ \hline 
\textit{\textbf{Max Acc}}  & \textbf{59.7}              & \textbf{61.1}              & \textbf{62.5}              & \textbf{52.4}              & \textbf{56.5}              & \textbf{82}                         & \textbf{84.5}                          & \textbf{86.5}                          & \textbf{82.1}                 &\textbf{81.4}           &\textbf{82.8}  & \textbf{79.9}                            & \textbf{81}                             & \textbf{79.9}                             & \textbf{81.1}                    \\ \hline 
\end{tabular}
  \end{subtable} 
\end{table*}
\subsubsection{Performance Gain in Terms of Testing Accuracy and Performance Gap}
Now, we present the results of the testing accuracy, focusing on the fairness between clients. 
We note that the training procedure for both all the algorithms, the proposed and benchmarks, has produced multiple models based on clustering, including a conventional FEEL model and a more tailored model for every group, as depicted in Table \ref{basetable}. We test all models amongst $15$ clients after finishing all the training rounds to show the effectiveness of each model depending on the resulting accuracy. One can see that our proposed approach yields three better-tailored models. As shown in Table \ref{table_proposedsetmodel}, it is clear that all participating devices attain well-fitting accuracy (outlined in green color), and inconsistency in performance (maximum accuracy and minimum accuracy in the last row of Table \ref{table_proposedsetmodel}) across all
is approximately $10\%$ only when our approach is used. In comparison, as depicted in Table \ref{table_baselinesetmodel}, almost more than $\frac{1}{3}$ of the tested participants(i.e., P 1, P 2, P 3, P 4, and P 5) achieve undesired accuracy (outlined in red color). The inconsistency in performance can reach up to $31.4\%$, demonstrating that some resulted models are still biased to the repeatedly selected devices which dominate the clustering.

\begin{figure*}[htbp]
\centering
\begin{subfigure}[b]{0.3\textwidth}
\includegraphics[width=\textwidth]{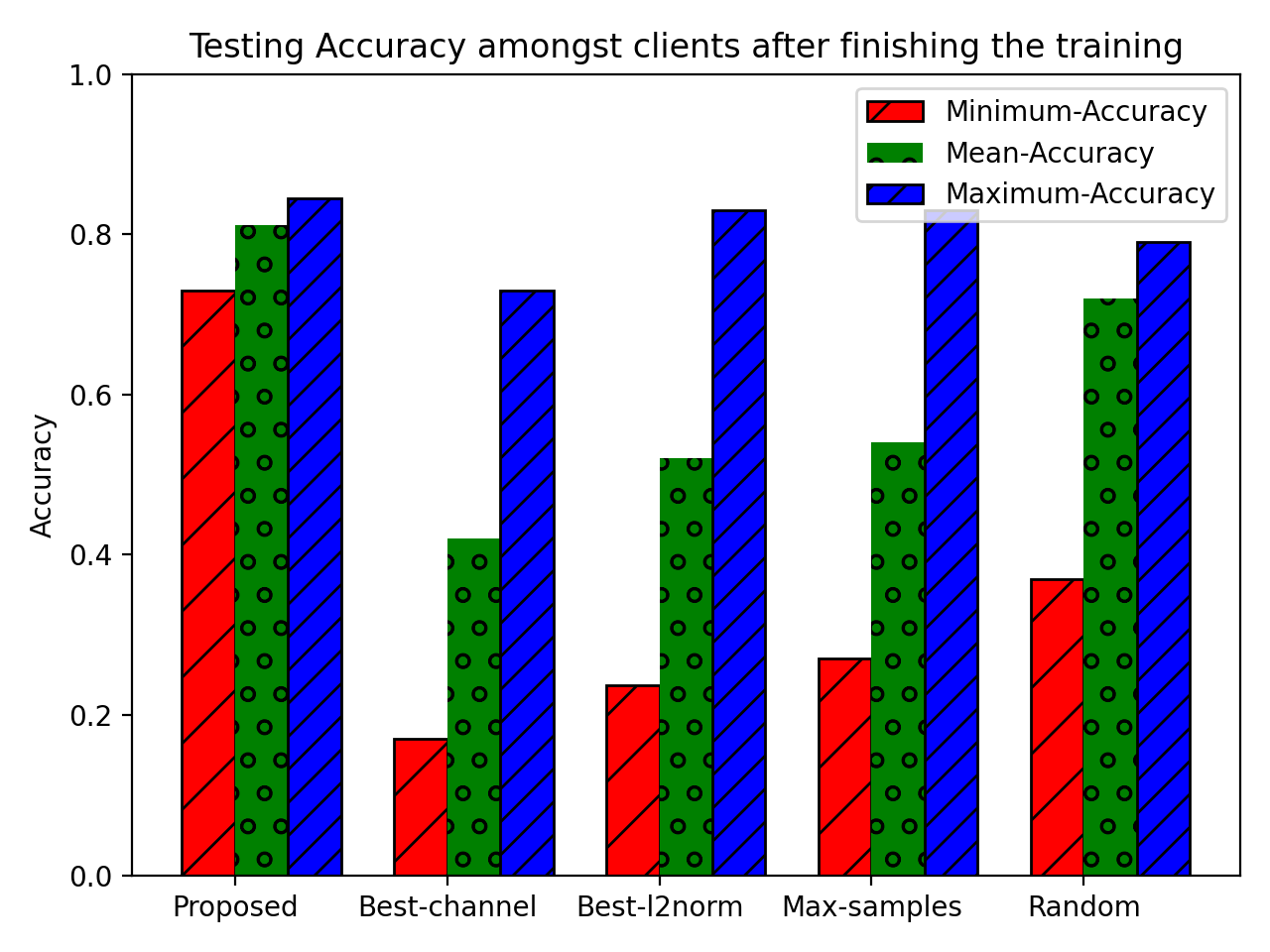}
\caption{$K$ = 20}
\label{fig:cifar_mam_20}
\end{subfigure}
\begin{subfigure}[b]{0.3\textwidth}
\includegraphics[width=\textwidth]{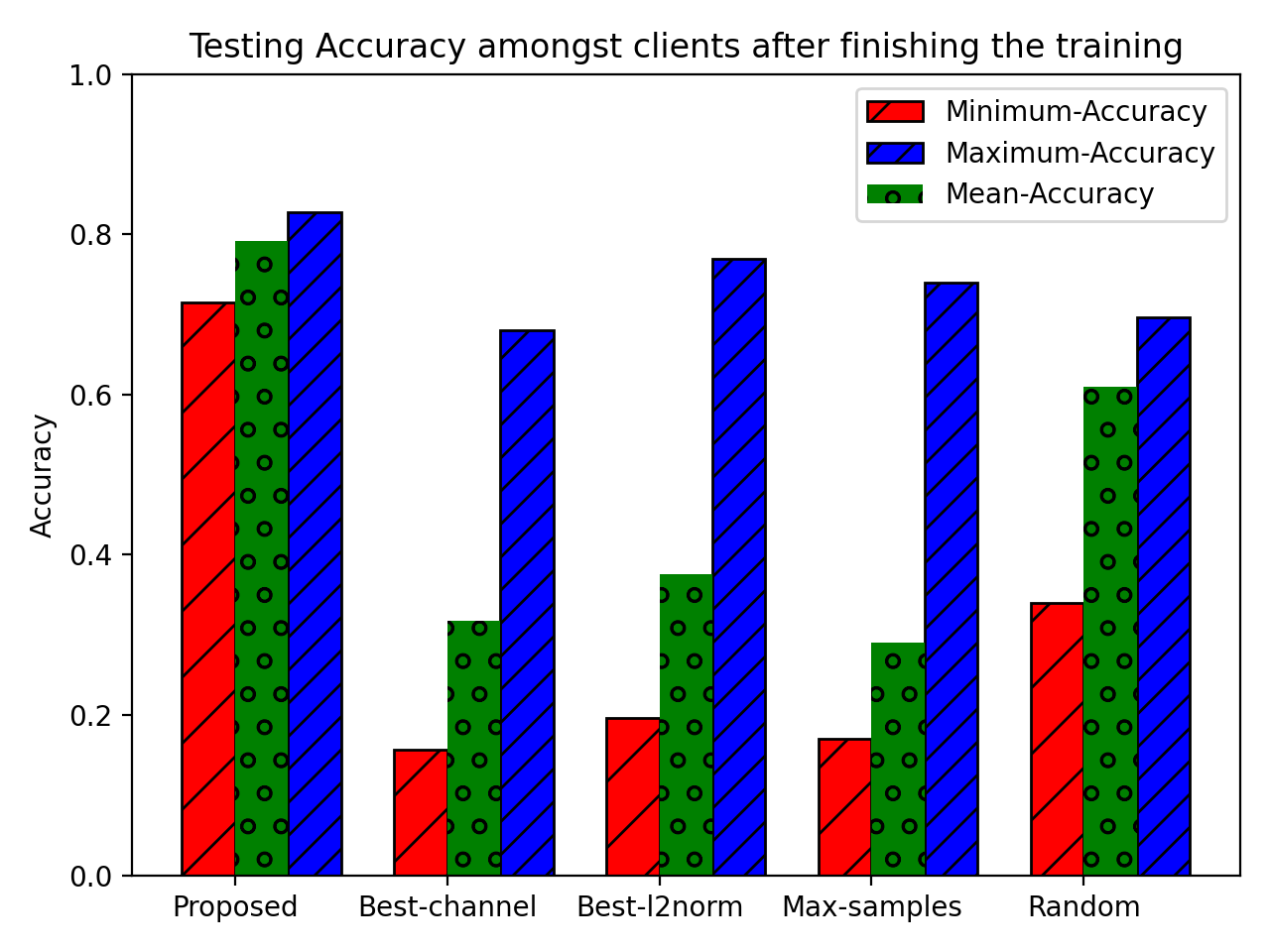}
\caption{$K$ = 50}
\label{fig:cifar_mam_50}
\end{subfigure}
\begin{subfigure}[b]{0.3\textwidth}
\includegraphics[width=\textwidth]{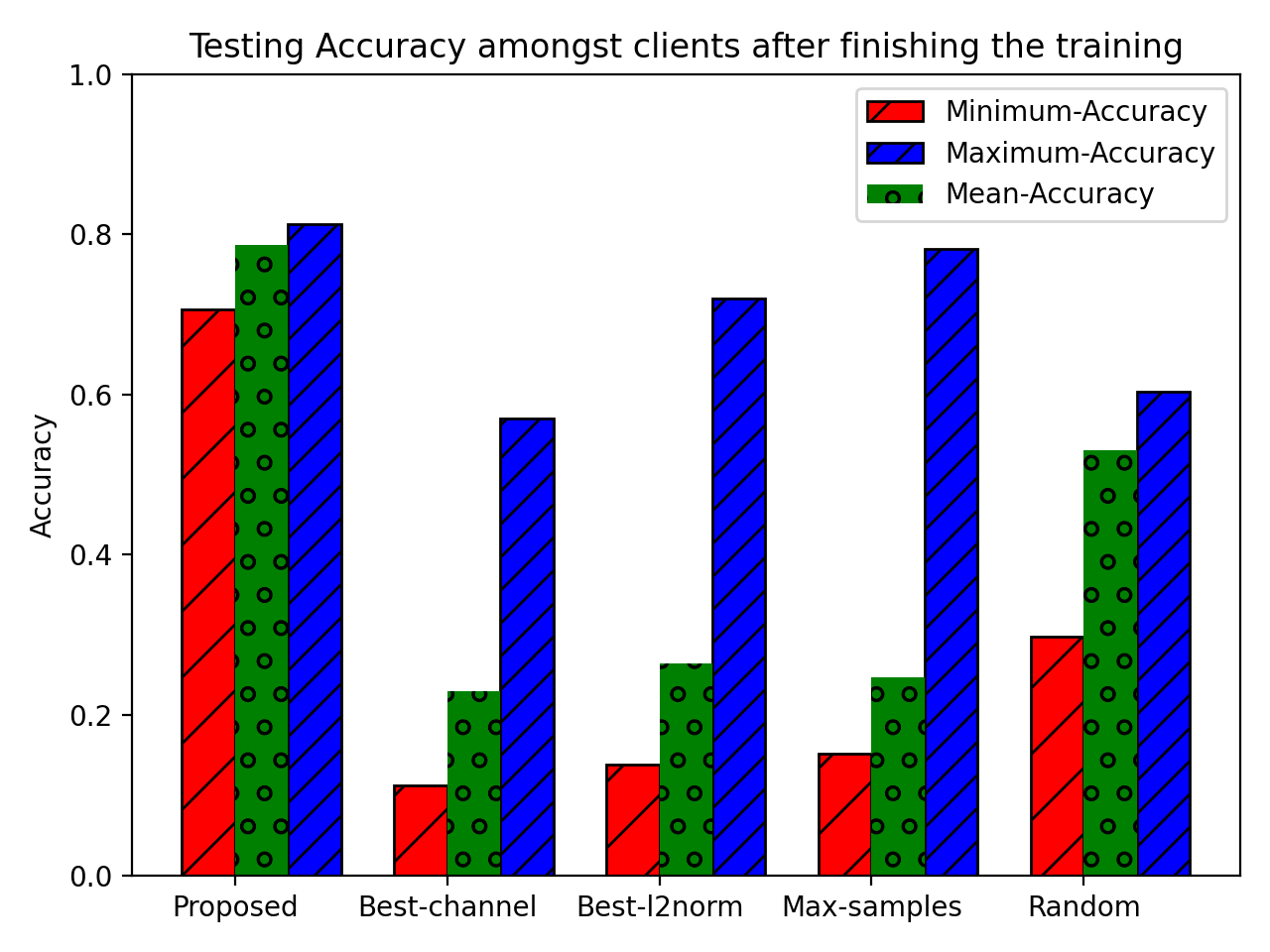}
\caption{$K$ = 100}
\label{fig:cifar_mam_100}
\end{subfigure}
\begin{subfigure}[b]{0.3\textwidth}
\includegraphics[width=\textwidth]{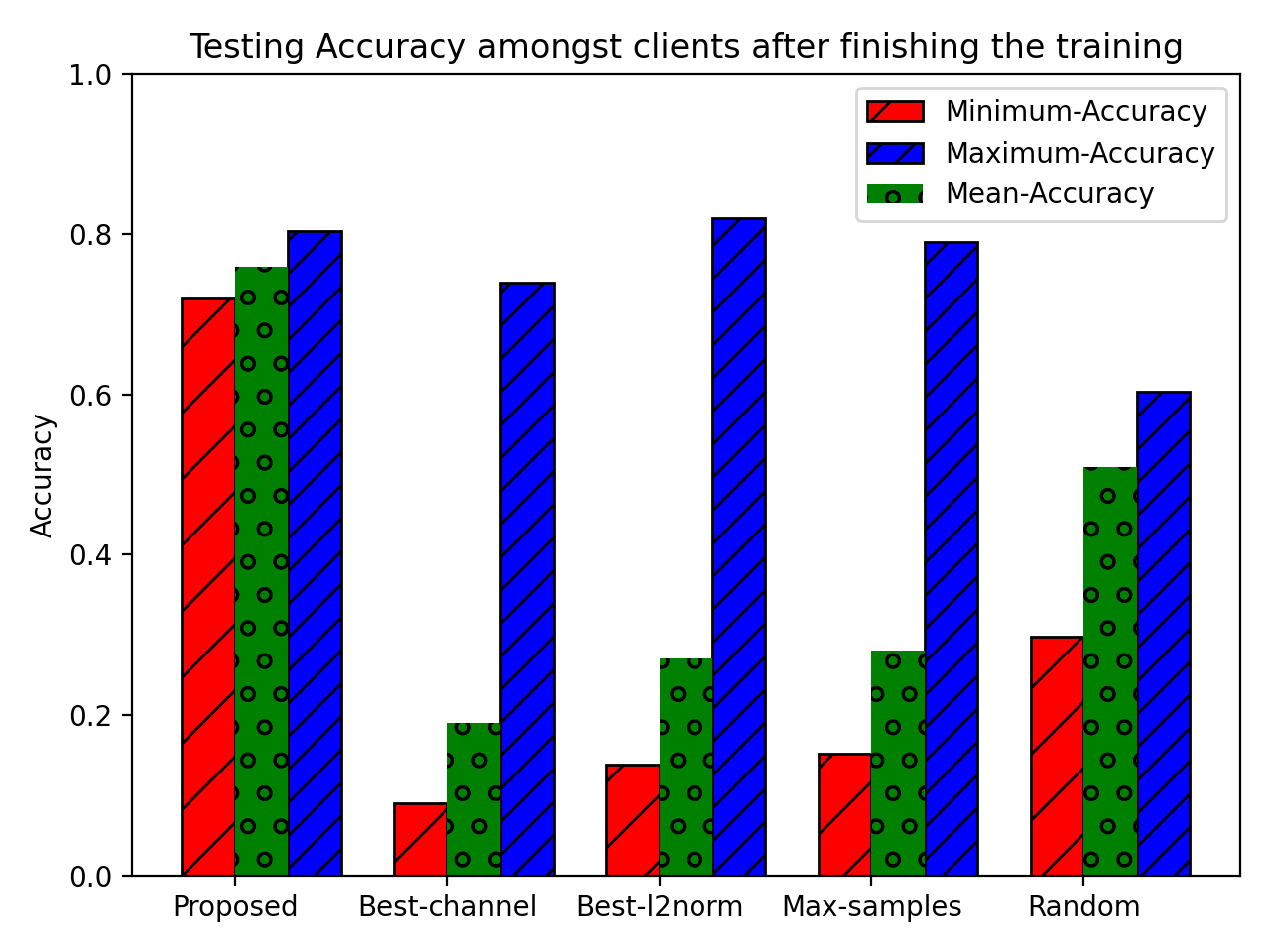}
\caption{$K$ = 200}
\label{fig:cifar_mam_200}
\end{subfigure}
\caption{Minimum, average, and maximum accuracy overall testing clients (CIFAR-10) when the data is distributed over 20, 50, 100, and 200 clients.}
\label{fig:max_avg_mean_cifar-10}
\end{figure*}

Furthermore, to validate the proposed algorithms, we perform other simulation experiments using the CIFAR-10 dataset as a complex learning task under non-i.i.d. federated settings. This assists in generalizing the performance of the proposed algorithms for different application domains. \Cref{fig:cifar_mam_20,fig:cifar_mam_50,fig:cifar_mam_100,fig:cifar_mam_200} show the minimum, maximum, and average testing accuracy amongst clients where the minimum accuracy represents the lowest accuracy can be achieved by the client and vise versa. It is worth mentioning that the proposed approach balances the accuracy between all clients despite the number of participating clients where the gap between the maximum and minimum accuracy amongst clients reaches up to $10\%$, meaning that the resulting models optimally fit all local data distribution. At the same time, we notice that the best channel, the best l2-norm, and the max-samples approaches increase the accuracy gap between the clients, reaching up to $70\%$, especially if the number of clients increases. This stems from the fact that the server stays connected with the same clients during all global rounds in those approaches, particularly when the channel is not changing rapidly, leading to biased models that can only fit the participants' clients. In addition, random scheduling provides less accuracy gaps due to the randomness of participant selection. Nevertheless, the proposed scheduling approach substantially outperforms all benchmarks and provides much more stable accuracy regardless of the distribution of the data and the number of clients in the network. 

\subsection{Lessons Learned}
The key takeaways from the experiments outlined in this paper can be listed as follows.
\begin{itemize}
    \item The proposed scheduling algorithms can accelerate the convergence rate compared to the baseline scheduling algorithms due to the fairness between clients at the beginning of the training.
    \item The proposed approach is effectively suited to deal with CFL when compared with the scheduling algorithms of traditional FL, seeking to avoid biased models to frequently selected clients.
    \item Selecting all clients at the beginning of the training leads to performing the correct clustering and capturing all data distributions amongst clients while increasing the clustering speed and accelerating the convergence rate.
    \item The proposed algorithms also reduce the resource consumption as only a single client can take part in the model training once a specific cluster converges to the stopping point.
    \item Overall, the proposed approach surpasses the benchmarks in terms of agglomeration quality with more tailored models, expediting convergence and minimizing training costs. 
\end{itemize}
\section{Conclusion}
\label{conclusion}
In this paper, we have proposed novel client scheduling and selection algorithms for clustered federated multitask learning to reduce the training latency and speed up the convergence rate, which improves the resulting model for each cluster. The proposed algorithms are based on the fairness between the clients across the network, where all available clients have equal chances of being selected to take part in the training process regardless of the channel state or the size of their local data.  This enables the edge system to imbue the clients with more specialized models rather than having biased models. Given non-i.i.d. and unbalanced data distribution, clients' heterogeneity, and restricted bandwidth, we have first formulated an optimization problem to obtain the best client scheduling that minimizes the training cost and improves the convergence speed. We have analyzed the relationship between the proposed scheduling approach and the convergence rate of the specialized models. We have conducted extensive simulation experiments using realistic federated datasets, FEMNIST and CIFAR-10. The findings demonstrate that the proposed approach efficiently reduces the training latency and accelerates convergence while attaining a satisfying performance. For future research, it will be interesting to adapt the age of the updates for the missing clients' updates and find the optimal thresholds for the splitting conditions.

\section*{Acknowledgement}
The work of Abdullatif Albsaeer, Mohamed Abdallah, and Ala Al-fuqaha was made possible by NPRP-Standard (NPRP-S) Thirteen (13th) Cycle grant \# NPRP13S-0201-200219 from the Qatar National Research Fund (a member of Qatar Foundation) and the work of O. A. Dobre was partially supported by the natural Sciences and Engineering research Council of Canada (NSERC) through its Discovery program. The findings herein reflect the work and are solely the responsibility of the authors.
\bibliographystyle{IEEEtran}
\bibliography{ref}

\begin{IEEEbiography}[{\includegraphics[width=1.1in,height=1.22in,clip]{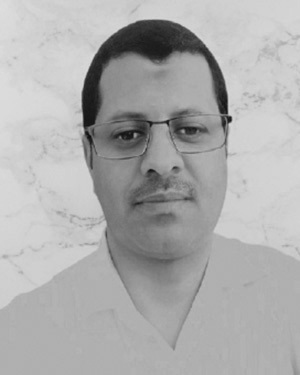}}]%
{Abdullatif Albaseer (Member, IEEE)} 
received an M.Sc. degree in computer networks from King Fahd University of Petroleum and Minerals, Dhahran, Saudi Arabia, in 2017 and a Ph.D. degree in computer science and engineering from Hamad Bin Khalifa University, Doha, Qatar, in 2022. He is a Postdoctoral Research Fellow with the Smart Cities and IoT Lab at Hamad Bin Khalifa University. He has authored and co-authored more than twenty conference and journal papers in IEEE ICC, IEEE Globecom, IEEE CCNC, and IEEE Transactions. He also has six US patents in the area of the wireless network edge. His current research interests include semantic communication, federated learning over network edge, industrial IoT, and cybersecurity. 
\end{IEEEbiography}

\begin{IEEEbiography}[{\includegraphics[width=1.1in,height=1.22in,clip]{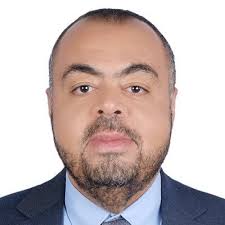}}]%
{Mohamed Abdallah (Senior Member,
IEEE)} received the B.Sc. degree from Cairo University, in 1996, and the M.Sc. and Ph.D. degrees
from the University of Maryland at College Park,
in 2001 and 2006, respectively. From 2006 to
2016, he held academic and research positions at
Cairo University and Texas A\&M University at
Qatar. He is currently a Founding Faculty Member
with the rank of Associate Professor with the
College of Science and Engineering, Hamad Bin
Khalifa University (HBKU). His current research interests include wireless networks, wireless security, smart grids, optical wireless communication,
and blockchain applications for emerging networks. He has published more than 150 journals and conferences and four book chapters, and co-invented
four patents. He was a recipient of the Research Fellow Excellence Award at Texas A\&M University at Qatar, in 2016, the Best Paper Award in multiple
IEEE conferences including the IEEE BlackSeaCom 2019, the IEEE First
Workshop on Smart Grid and Renewable Energym in 2015, and the Nortel
Networks Industrial Fellowship for five consecutive years, from 1999 to
2003. His professional activities include an Associate Editor of the IEEE
TRANSACTIONS ON COMMUNICATIONS and the IEEE OPEN ACCESS JOURNAL OF
COMMUNICATIONS, a Track Co-Chair of the IEEE VTC Fall 2019 conference,
a Technical Program Chair of the 10th International Conference on Cognitive
Radio Oriented Wireless Networks, and a Technical Program Committee
Member of several major IEEE conferences.
\end{IEEEbiography}

\begin{IEEEbiography}[{\includegraphics[width=1.1in,height=1.22in,clip]{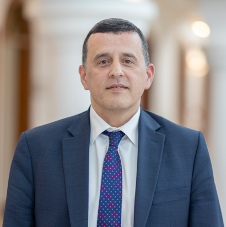}}]%
{Ala Al-Fuqaha (Senior Member, IEEE)}
  received Ph.D. degree in Computer Engineering and Networking from the University of Missouri-Kansas City, Kansas City, MO, USA, in 2004. He is currently a professor at Hamad Bin Khalifa University (HBKU). His research interests include the use of machine learning in general and deep learning in particular in support of the data-driven and self-driven management of large-scale deployments of IoT and smart city infrastructure and services, Wireless Vehicular Networks (VANETs), cooperation, and spectrum access etiquette in cognitive radio networks, and management and planning of software defined networks (SDN). He is a senior member of the IEEE and an ABET Program Evaluator (PEV). He serves on editorial boards of multiple journals including IEEE Communications Letter and IEEE Network Magazine. He also served as chair, co-chair, and technical program committee member of multiple international conferences including IEEE VTC, IEEE Globecom, IEEE ICC, and IWCMC.
\end{IEEEbiography}

\begin{IEEEbiography} [{\includegraphics[width=1in,height=1.25in,clip,keepaspectratio]{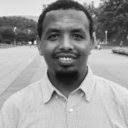}}]{ \rmfamily  Abegaz Mohammed Seid (memeber, IEEE)} \rmfamily \mdseries received his B.Sc. and M.Sc. degrees in Computer Science from Ambo University and Addis Ababa University, Ethiopia, in 2010 and 2015, respectively. He received a Ph.D. degree in Computer Science and Technology from the University of Electronic Science and Technology of China (UESTC) in 2021. He is currently a post-doctoral fellow with the  College of  Science and  Engineering at Hamad  Bin Khalifa University, Doha, Qatar. He served as a graduate assistant and lecturer, as well as a member of the academic committee and an associate registrar at Dilla University, Ethiopia, from 2010 to 2016. Dr. Abegaz has published more than thirteen scientific conferences and journal papers. His research interests include a wireless network, mobile edge computing, blockchain, machine learning, Vehicular network, IoT, machine learning, UAV Network, IoT, and 5G/6G wireless network.
\end{IEEEbiography}

\begin{IEEEbiography}
[{\includegraphics[width=1in,height=1.25in,clip,keepaspectratio]{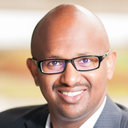}}]{ \rmfamily  Aiman Erbad (Senior Member, IEEE)} \rmfamily \mdseries
 is an Associate Professor and ICT Division Head in the College of Science and Engineering, Hamad Bin Khalifa University (HBKU). Prior to this, he was an Associate Professor at the Computer Science and Engineering (CSE) Department and the Director of Research Planning and Development at Qatar University until May 2020. He also served as the Director of Research Support responsible for all grants and contracts (2016–2018) and as the Computer Engineering Program Coordinator (2014–2016). Dr. Erbad obtained a Ph.D. in Computer Science from the University of British Columbia (Canada) in 2012, a Master of Computer Science in embedded systems and robotics from the University of Essex (UK) in 2005, and a B.Sc. in Computer Engineering from the University of Washington, Seattle in 2004. He received the Platinum award from H.H. The Emir Sheikh Tamim bin Hamad Al Thani at the Education Excellence Day 2013 (Ph.D. category). He also received the 2020 Best Research Paper Award from Computer Communications, the IWCMC 2019 Best Paper Award, and the IEEE CCWC 2017 Best Paper Award. His research received funding from the Qatar National Research Fund, and his research outcomes were published in respected international conferences and journals. He is an editor for KSII Transactions on Internet and Information Systems, an editor for the International Journal of Sensor Networks (IJSNet), and a guest editor for IEEE Network. He also served as a Program Chair of the International Wireless Communications Mobile Computing Conference (IWCMC 2019), as a Publicity chair of the ACM MoVid Workshop 2015, as a Local Arrangement Chair of NOSSDAV 2011, and as a Technical Program Committee (TPC) member in various IEEE and ACM international conferences (GlobeCom, NOSSDAV, MMSys, ACMMM, IC2E, and ICNC). His research interests span cloud computing, edge intelligence, Internet of Things (IoT), private and secure networks, and multimedia systems. He is a senior member of IEEE and ACM.
\end{IEEEbiography}

\begin{IEEEbiography}[{\includegraphics[width=1.1in,height=1.22in,clip]{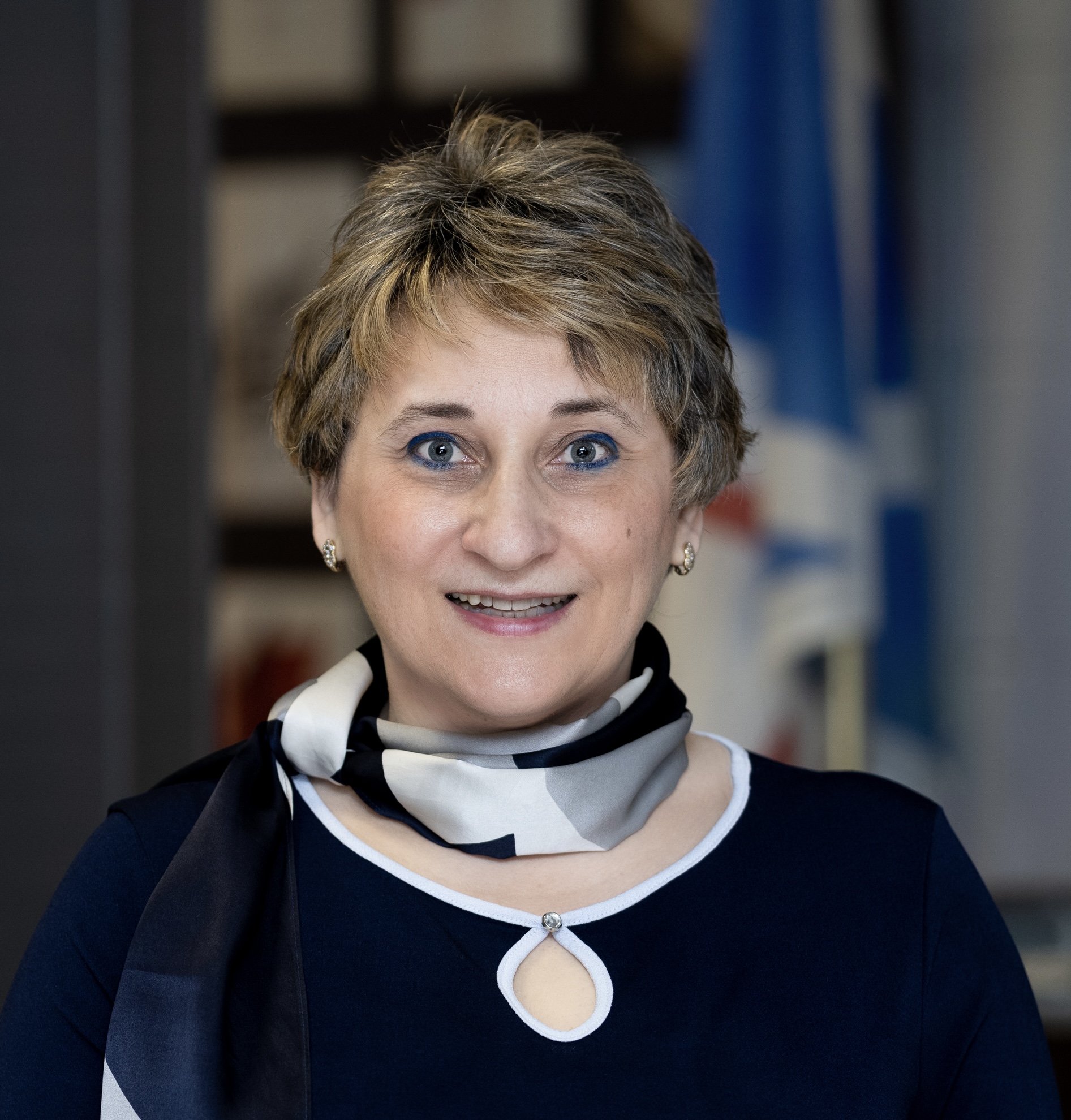}}]%
{Octavia A. Dobre (Fellow, IEEE)} received
the Dipl. Ing. and Ph.D. degrees from the Polytechnic Institute of Bucharest, Romania, in 1991
and 2000, respectively. Between 2002 and 2005,
she was with New Jersey Institute of Technology,
USA. In 2005, she joined Memorial University,
Canada, where she is currently a Professor and
Research Chair. She was a Visiting Professor with
Massachusetts Institute of Technology, USA and
Universite de Bretagne Occidentale, France. Her ´
research interests encompass wireless, optical and
underwater communication technologies. She has (co-)authored over 400
refereed papers in these areas.
Dr. Dobre serves as the Editor-in-Chief (EiC) of the IEEE Open Journal of
the Communications Society. She was the EiC of the IEEE Communications
Letters, Senior Editor, Editor, and Guest Editor for various prestigious journals
and magazines. She also served as General Chair, Technical Program CoChair, Tutorial Co-Chair, and Technical Co-Chair of symposia at numerous
conferences.
Dr. Dobre was a Fulbright Scholar, Royal Society Scholar, and Distinguished Lecturer of the IEEE Communications Society. She obtained Best
Paper Awards at various conferences, including IEEE ICC, IEEE Globecom,
IEEE WCNC, and IEEE PIMRC. Dr. Dobre is a Fellow of the Engineering
Institute of Canada and a Fellow of the Canadian Academy of Engineering.
\end{IEEEbiography}
\onecolumn
\appendices
\section{Proof of Theorem \ref{th:1}}\label{AppProofth:1}
We recall the updated global model as follows:
\begin{align}\label{AppFDPModUpdateTheta}
{\boldsymbol{W}} (r+1) = {\boldsymbol{W}} (r) + \frac{1}{K} \sum\nolimits_{k \in \Omega^m_r} \widehat{\boldsymbol{\mathcal{L}}}_k (r),
\end{align}
Let us define the following aux variable as in \cite{amiri2021convergence}:
\begin{align}\label{AppFDPThetaTilde}
\boldsymbol{W'} (r+1) = {\boldsymbol{W}} (r) + \frac{1}{K} \sum\nolimits_{k =1}^{K} \widehat{\boldsymbol{\mathcal{L}}}_k (r).
\end{align}
We can have: 
\begin{align}\label{AppFDP_1}
& \left\| {\boldsymbol{W}} (r+1) - {{\boldsymbol{W}}}^* \right\|_2^2 =  \left\| {\boldsymbol{W}} (r+1) - {\boldsymbol{W'}} (r+1) + {\boldsymbol{W'}} (r+1)  - {{\boldsymbol{W}}}^* \right\|_2^2 \nonumber \\
& \; = \left\| {\boldsymbol{W}} (r+1) - {\boldsymbol{W'}} (r+1) \right\|_2^2 + \left\| {\boldsymbol{W'}} (r+1)  - {{\boldsymbol{W}}}^* \right\|_2^2  + 2 \langle {\boldsymbol{W}} (r+1) - {\boldsymbol{W'}} (r+1) , {\boldsymbol{W'}} (r+1)  - {{\boldsymbol{W}}}^* \rangle.  
\end{align}
Now, we bound the average of the right hand side of \eqref{AppFDP_1}.

\begin{lemma}\label{AppLemmaTerm_1}
We have the following optimally gap for the fairness scheduling algorithm:
\begin{align}\label{AppLemmaTerm_1_Eq_1}
\mathbb{E} \left[ \left\| {\boldsymbol{W}} (r+1) - {\boldsymbol{W'}} (r+1) \right\|_2^2 \right] \le \epsilon.    
\end{align}
\end{lemma}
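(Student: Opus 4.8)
The plan is to bound the discrepancy between the true aggregated update $\boldsymbol{W}(r+1)$, which uses only the scheduled set $\Omega^m_r$, and the auxiliary iterate $\boldsymbol{W'}(r+1)$, which uses the full client population $\{1,\dots,K\}$. Writing out the difference from \eqref{AppFDPModUpdateTheta} and \eqref{AppFDPThetaTilde}, we get
\begin{equation}
\boldsymbol{W}(r+1) - \boldsymbol{W'}(r+1) = \frac{1}{K}\sum_{k=1}^{K}\widehat{\boldsymbol{\mathcal{L}}}_k(r)\bigl(\mathbf{1}[k\in\Omega^m_r]-1\bigr) = -\frac{1}{K}\sum_{k\notin\Omega^m_r}\widehat{\boldsymbol{\mathcal{L}}}_k(r),
\end{equation}
so the quantity to be controlled is the aggregated accumulated local update over the \emph{non-scheduled} clients. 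The key observation, which I would state first, is that in the fairness phase of the proposed algorithm every available client is scheduled in each round, i.e. $\Omega^m_r = \{1,\dots,K\}$ (equivalently the selection probability $\tfrac{|\Omega^m_r|}{K}=1$, as noted in Section~\ref{convergence_detials}). Hence the sum over $k\notin\Omega^m_r$ is empty and the difference is exactly zero, which trivially gives the bound by any $\epsilon\ge 0$; this is the cleanest route and I would present it as the primary argument.

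To make the statement meaningful for the \emph{second} (greedy) phase as well — where only one representative client per converged cluster participates — I would give a secondary argument: within a cluster that has reached the stopping criterion \eqref{eq:stopping}, all members have congruent data distribution, so by \eqref{eq:cossim} their gradients are (approximately) identical, $\nabla F_k(\boldsymbol{W}) \approx \nabla F_{k'}(\boldsymbol{W})$ for $k,k'$ in the cluster. Then the accumulated updates $\widehat{\boldsymbol{\mathcal{L}}}_k(r)$ of the omitted clients coincide with that of the selected representative up to a residual governed by $\varepsilon_2$ (the threshold in the stopping condition) and by $\gamma_k$ (the deviation ratio defined in Algorithm~\ref{alg:proposedscheduling}). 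Taking expectations, using the triangle inequality and Assumption~\ref{AssumpBoundedVarGradient} to bound each $\|\widehat{\boldsymbol{\mathcal{L}}}_k(r)\|$ by $\mathcal{T}\eta_r\varrho$, the norm $\|\boldsymbol{W}(r+1)-\boldsymbol{W'}(r+1)\|$ is at most a constant multiple of $\varepsilon_2$ times the number of clustered-away clients, which can be absorbed into an appropriately defined $\epsilon$.

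I would therefore structure the proof as: (i) expand the difference and reduce it to the sum over non-scheduled clients; (ii) invoke the fairness-phase scheduling rule ($\Omega^m_r$ is the whole population) to kill that sum, yielding the bound with $\epsilon=0$; (iii) remark that in the greedy phase the same bound holds with a small $\epsilon>0$ proportional to $\varepsilon_2$ because the omitted clients are congruent with the retained one, using Assumption~\ref{AssumpBoundedVarGradient} and the bounded step size $\eta_r=\tfrac{1}{\alpha\mathcal{T}}$ to make the residual explicit. The main obstacle is the second phase: one must argue carefully that the cosine-similarity characterization \eqref{eq:cossim} — which is stated for the idealized stationary point $\boldsymbol{W}^*$ — still yields a quantitatively small gradient discrepancy at the iterates $\boldsymbol{W}(r)$ encountered during training, so that the ``dropped'' contributions are genuinely $O(\varepsilon_2)$ rather than uncontrolled; the clean fairness-phase case, which is the one actually used downstream in Theorem~\ref{th:1}, requires no such care and is essentially immediate.
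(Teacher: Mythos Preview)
Your argument is correct and, for the fairness phase, actually cleaner than the paper's. You go straight to the point: write the difference as $-\tfrac{1}{K}\sum_{k\notin\Omega^m_r}\widehat{\boldsymbol{\mathcal{L}}}_k(r)$, observe that $\Omega^m_r=\{1,\dots,K\}$ in the fairness phase, and conclude the difference is identically zero. The paper instead imports the full sampling-without-replacement variance decomposition from \cite{amiri2021convergence}: it expands the squared norm into diagonal and cross terms, computes the scheduling expectation via binomial-coefficient ratios $\binom{K-1}{|\Omega^m_r|-1}/\binom{K}{|\Omega^m_r|}$ and $\binom{K-2}{|\Omega^m_r|-2}/\binom{K}{|\Omega^m_r|}$, and only then specializes to $|\Omega^m_r|=K$ (so those ratios become~$1$), finishing with the observation that $\sum_k(\widehat{\boldsymbol{\mathcal{L}}}_k-\widehat{\boldsymbol{\mathcal{L}}})=0$ to get $\epsilon\approx 0$. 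Your route bypasses all of that machinery; what the paper's approach buys is that it visibly sits inside the general partial-participation framework, so one sees exactly which factor degenerates when every client is scheduled. Your additional discussion of the greedy phase (bounding the omitted congruent clients by $O(\varepsilon_2)$ via the stopping criterion and Assumption~\ref{AssumpBoundedVarGradient}) goes beyond what the paper proves --- the paper's Lemma~\ref{AppLemmaTerm_1} is stated and proved only for the fairness scheduling --- and your caveat about transferring the cosine-similarity identity from $\boldsymbol{W}^*$ to the running iterate is exactly the right gap to flag there.
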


\begin{proof}
See Appendix \ref{AppProofPDPLemmaTerm_1}. 
\end{proof}

\begin{lemma}\label{AppPDPLemmaTerm_3}
Let $\mathbb{E}_{\mathcal{K}(r)}$ denote expectation over the client scheduling fairness at round $r$. We have
\begin{align}\label{AppPDPBoundTerm_3_1}
\mathbb{E}_{\mathcal{K}(r)} \left[ {\boldsymbol{W}} (r+1) \right] = {\boldsymbol{W'}} (r+1),   
\end{align}
In light of this, it follows
\begin{align}\label{AppPDP_2}
\mathbb{E}_{\mathcal{K}(r)} \left[ \langle {\boldsymbol{W}} (r+1) - {\boldsymbol{W'}} (r+1) , {\boldsymbol{W'}} (r+1)  - {{\boldsymbol{W}}}^* \rangle \right] = 0.   
\end{align}
\end{lemma}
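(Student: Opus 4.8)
The plan is to prove Lemma \ref{AppPDPLemmaTerm_3} in two stages: first establish the conditional-expectation identity \eqref{AppPDPBoundTerm_3_1}, and then derive the vanishing of the cross term \eqref{AppPDP_2} as an immediate consequence. The key structural fact to exploit is the definition of the auxiliary variable $\boldsymbol{W'}(r+1)$ in \eqref{AppFDPThetaTilde}, which averages the local updates $\widehat{\boldsymbol{\mathcal{L}}}_k(r)$ over \emph{all} $K$ clients, whereas the actual iterate $\boldsymbol{W}(r+1)$ in \eqref{AppFDPModUpdateTheta} averages only over the scheduled subset $\Omega^m_r$ (with the same $\tfrac{1}{K}$ normalization). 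In the fairness phase of the proposed scheduler, every client is selected in each round, i.e. $|\Omega^m_r| = K$ and the selection probability is $\tfrac{|\Omega^m_r|}{K} = 1$ as noted in Section \ref{convergence_detials}; more generally, the scheduling indicator $s^k_r$ has conditional mean equal to this sampling probability, so that $\mathbb{E}_{\mathcal{K}(r)}\big[\tfrac{1}{K}\sum_{k\in\Omega^m_r}\widehat{\boldsymbol{\mathcal{L}}}_k(r)\big] = \tfrac{1}{K}\sum_{k=1}^{K}\widehat{\boldsymbol{\mathcal{L}}}_k(r)$.

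First I would write $\boldsymbol{W}(r+1) = \boldsymbol{W}(r) + \tfrac{1}{K}\sum_{k=1}^{K} \mathbf{1}\{k \in \Omega^m_r\}\,\widehat{\boldsymbol{\mathcal{L}}}_k(r)$, making the random scheduling explicit as Bernoulli-type indicators. Taking $\mathbb{E}_{\mathcal{K}(r)}$ — expectation over the scheduling randomness at round $r$, conditioned on $\boldsymbol{W}(r)$ and the local updates — and using linearity of expectation together with $\mathbb{E}_{\mathcal{K}(r)}[\mathbf{1}\{k\in\Omega^m_r\}] = 1$ under the fairness policy (or $= p_k$ with the unbiased reweighting in the general case), the indicator disappears and the sum collapses to exactly the right-hand side of \eqref{AppFDPThetaTilde}, giving \eqref{AppPDPBoundTerm_3_1}. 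The term $\boldsymbol{W}(r)$ is deterministic given the conditioning, so it passes through untouched.

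For \eqref{AppPDP_2}, the point is that $\boldsymbol{W'}(r+1)$ and $\boldsymbol{W'}(r+1) - \boldsymbol{W}^*$ are both measurable with respect to the $\sigma$-algebra generated by $\boldsymbol{W}(r)$ and the local gradients/updates, hence constant under $\mathbb{E}_{\mathcal{K}(r)}$. Pulling this constant vector out of the inner product, what remains is $\langle \mathbb{E}_{\mathcal{K}(r)}[\boldsymbol{W}(r+1)] - \boldsymbol{W'}(r+1),\ \boldsymbol{W'}(r+1) - \boldsymbol{W}^*\rangle$, and by \eqref{AppPDPBoundTerm_3_1} the first factor is the zero vector, so the whole expression is $0$. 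I would also note the consequence for \eqref{AppFDP_1}: taking $\mathbb{E}_{\mathcal{K}(r)}$ of that identity kills the cross term, leaving only the two squared-norm terms, which is precisely what Lemma \ref{AppLemmaTerm_1} and the subsequent lemmas bound.

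The main obstacle — really a modeling subtlety rather than a hard computation — is justifying the unbiasedness step $\mathbb{E}_{\mathcal{K}(r)}[\mathbf{1}\{k\in\Omega^m_r\}\widehat{\boldsymbol{\mathcal{L}}}_k(r)] = \widehat{\boldsymbol{\mathcal{L}}}_k(r)$ cleanly. In the pure fairness phase this is trivial because every client participates with probability one, so the indicator is identically $1$; the statement is then essentially a tautology ($\Omega^m_r = \mathcal{K}$, so the two averages coincide even before taking expectations). One must be careful to state explicitly that the lemma is invoked in this first scheduling phase (or, if the greedy phase is also covered, that the $\tfrac{1}{K}$ normalization in \eqref{AppFDPModUpdateTheta} is implicitly the importance-weighted form), and that the conditioning is on everything except the round-$r$ scheduling draw, so that $\widehat{\boldsymbol{\mathcal{L}}}_k(r)$ may be treated as fixed. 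Once that measurability/independence bookkeeping is pinned down, both displays follow in a line or two.
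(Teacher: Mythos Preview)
Your proposal is correct and follows essentially the same approach as the paper: the paper's proof simply notes that under the fairness scheduling policy every client is selected (the policy is ``definite''), so $\mathbb{E}_{\mathcal{K}(r)}\big[\tfrac{1}{K}\sum_{k\in\mathcal{K}(r)}\widehat{\boldsymbol{\mathcal{L}}}_k(r)\big]=\tfrac{1}{K}\sum_{k=1}^{K}\widehat{\boldsymbol{\mathcal{L}}}_k(r)$, which is exactly your indicator argument with $\mathbb{E}_{\mathcal{K}(r)}[\mathbf{1}\{k\in\Omega^m_r\}]=1$. Your treatment of the cross term via measurability of $\boldsymbol{W'}(r+1)-\boldsymbol{W}^*$ under the conditioning is more explicit than the paper's, which merely states that \eqref{AppPDP_2} ``is concluded from'' \eqref{AppPDPBoundTerm_3_1}, but the logic is the same.
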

\begin{proof}
Since the client scheduling policy in our proposed algorithm is definite, it follows that
\newcommand\thirdequal{\mathrel{\overset{\makebox[0pt]{\mbox{\normalfont\tiny\sffamily (a)}}}{=}}}
\begin{align}\label{AppPDPBoundTerm_3_2}
\mathbb{E}_{\mathcal{K}(r)} \left[ \frac{1}{K} \sum\nolimits_{k \in \mathcal{K} (r)} \widehat{\boldsymbol{\mathcal{L}}}_k (r) \right] \thirdequal 1 \sum\nolimits_{k=1}^{K} \widehat{\boldsymbol{\mathcal{L}}}_k (r) = \frac{1}{K} \sum\nolimits_{k=1}^{K} \widehat{\boldsymbol{\mathcal{L}}}_k (r).      
\end{align}
The proof of Lemma \ref{AppPDPLemmaTerm_3} is concluded from \eqref{AppPDPBoundTerm_3_2}. 
\end{proof}

Depending on the results of Lemmas \ref{AppLemmaTerm_1} and \ref{AppPDPLemmaTerm_3}, we have 
\newcommand\firstinequal{\mathrel{\overset{\makebox[0pt]{\mbox{\normalfont\tiny\sffamily (a)}}}{\le}}}
\newcommand\secondinequal{\mathrel{\overset{\makebox[0pt]{\mbox{\normalfont\tiny\sffamily (b)}}}{\le}}}
\begin{align}\label{AppFDP_2}
&\mathbb{E} \left[ \left\| {\boldsymbol{W}} (r+1) - {{\boldsymbol{W}}}^* \right\|_2^2 \right] \le \left( 1 - \alpha  \eta_r \left( \mathcal{T} - \eta_r (\mathcal{T} - 1) \right) \right) \mathbb{E} \left[ \left\| {\boldsymbol{W}} (r) - {{\boldsymbol{W}}}^* \right\|_2^2 \right] \nonumber\\
& \;\; \qquad \quad +  \eta^2(r) \left( \mathcal{T}^2 + \mathcal{T} - 1 \right) \varrho^2  \nonumber\\
& \;\; \qquad \quad +  \left( 1+ \alpha (1- \eta_r) \right) \eta^2(r) \varrho^2 \frac{\mathcal{T} (\mathcal{T}-1)(2\mathcal{T}-1)}{6}  + 2  \eta_r (\mathcal{T} - 1) \mathfrak{F} \nonumber\\
& \;\; \qquad \quad + 2  \eta_r \frac{1}{K} \sum\nolimits_{k=1}^{K} \sum\nolimits_{t=2}^{\mathcal{T}} \left( F_k^* - \mathbb{E} \left[ F_k({{\boldsymbol{W}}}_k^t (r)) \right] \right) + 2  \eta_r \left(F^* - \mathbb{E} \left[ F({{\boldsymbol{W}}} (r)) \right] \right)  \nonumber\\
& \qquad \; \firstinequal \left( 1 - \alpha  \eta_r \left( \mathcal{T} - \eta_r (\mathcal{T} - 1) \right) \right) \mathbb{E} \left[ \left\| {\boldsymbol{W}} (r) - {{\boldsymbol{W}}}^* \right\|_2^2 \right] \nonumber\\
& \;\;  \qquad \quad +  \eta^2(r) \left( \mathcal{T}^2 + \mathcal{T} -1 \right) \varrho^2  \nonumber\\
& \;\;  \qquad \quad +  \left( 1+ \alpha (1- \eta_r) \right) \eta^2(r) \varrho^2 \frac{\mathcal{T} (\mathcal{T}-1)(2\mathcal{T}-1)}{6} + 2  \eta_r (\mathcal{T} - 1) \mathfrak{F}.
\end{align}
In this case, (a) follows because $F^* - F({\boldsymbol{W}} (r)) \le 0$, $\forall r$, and $F_k^* - F_k({{\boldsymbol{W}}}_k^r) \le 0$, $\forall k, r$. According to \eqref{AppFDP_2}, we conclude Theorem \ref{th:1}. 

\section{Proof of Lemma \ref{AppLemmaTerm_1}}\label{AppProofPDPLemmaTerm_1}
To prove Lemma \ref{AppLemmaTerm_1}, we take similar steps as \cite[Appendix B.4]{amiri2021convergence}. We have 
\begin{align}\label{AppPDPLemmaTemr_1_Eq_1}
\mathbb{E} \left[ \left\| {\boldsymbol{W}} (r+1) - {\boldsymbol{W'}} (r+1) \right\|_2^2 \right] =  \mathbb{E} \bigg[ \left\| \frac{1}{K} \sum\nolimits_{k \in \mathcal{K} (r)} \widehat{\boldsymbol{\mathcal{L}}}_k (r) - \widehat{\boldsymbol{\mathcal{L}}} (r) \right\|_2^2 \bigg],   
\end{align}
where
\begin{align}
\widehat{\boldsymbol{\mathcal{L}}} (r) \triangleq \frac{1}{K} \sum\nolimits_{k =1}^{K} \widehat{\boldsymbol{\mathcal{L}}}_k (r).    
\end{align}
 We notice that  the indicator ${1} (k \in \mathcal{K} (r)) = 1$ and  ${1} (k{'} \in \mathcal{K} (r)) = 1$ due to the fairness amongst clients. As such, we eliminate its effects in the following proofs. We have
\newcommand\thirdequal{\mathrel{\overset{\makebox[0pt]{\mbox{\normalfont\tiny\sffamily (a)}}}{=}}}
\begin{align}\label{AppPDPLemmaTemr_1_Eq_2}
&\mathbb{E} \left[ \left\| {\boldsymbol{W}} (r+1) - {\boldsymbol{W'}} (r+1) \right\|_2^2 \right]  = \mathbb{E} \bigg[ \left\| \frac{1}{K} \sum\nolimits_{k =1}^{K} \left( \widehat{\boldsymbol{\mathcal{L}}}_k (r) - \widehat{\boldsymbol{\mathcal{L}}} (r) \right) \right\|_2^2 \bigg]\nonumber\\
& \; = \frac{1}{K^2} \mathbb{E} \left[ \sum\nolimits_{k =1}^{K} \left\| \widehat{\boldsymbol{\mathcal{L}}}_k (r) - \widehat{\boldsymbol{\mathcal{L}}} (r) \right\|_2^2 \right.\nonumber\\
&\; \quad \; \left. + \sum\nolimits_{k =1}^{K} \sum\nolimits_{k' =1, k' \ne k}^{K} \langle \widehat{\boldsymbol{\mathcal{L}}}_k (r) - \widehat{\boldsymbol{\mathcal{L}}} (r), \widehat{\boldsymbol{\mathcal{L}}}_{k'} (r) - \widehat{\boldsymbol{\mathcal{L}}} (r) \rangle \right].
\end{align}
Based on the symmetry, we can conclude that
\begin{align}\label{AppPDPLemmaTemr_1_Eq_3}
\mathbb{E}_{\mathcal{K}(r)} \bigg[ \sum\limits_{k =1}^{K} \left\| \widehat{\boldsymbol{\mathcal{L}}}_k (r) - \widehat{\boldsymbol{\mathcal{L}}} (r) \right\|_2^2 \bigg] &\thirdequal \frac{\binom{K-1}{{|\Omega^m_r|}-1}}{\binom{K}{{|\Omega^m_r|}}} \sum\limits_{k =1}^{K} \left\| \widehat{\boldsymbol{\mathcal{L}}}_k (r) - \widehat{\boldsymbol{\mathcal{L}}} (r) \right\|_2^2 \nonumber\\
& = \frac{K}{{|\Omega^m_r|}} \sum\limits_{k =1}^{K} \left\| \widehat{\boldsymbol{\mathcal{L}}}_k (r) - \widehat{\boldsymbol{\mathcal{L}}} (r) \right\|_2^2,   
\end{align}
where (a) follows the fact that in the proposed scheduling approach every client index $k$, for $k \in \mathcal{K}$, appears $r$ times before splitting, and 
\newcommand\fifthequal{\mathrel{\overset{\makebox[0pt]{\mbox{\normalfont\tiny\sffamily (b)}}}{=}}}
\begin{align}\label{AppPDPLemmaTemr_1_Eq_4}
& \mathbb{E}_{\mathcal{K}(r)} \bigg[ \sum\limits_{k =1}^{K} \sum\limits_{k' =1, k' \ne k}^{K} \langle \widehat{\boldsymbol{\mathcal{L}}}_k (r) - \widehat{\boldsymbol{\mathcal{L}}} (r), \widehat{\boldsymbol{\mathcal{L}}}_{k'} (r) - \widehat{\boldsymbol{\mathcal{L}}} (r) \rangle \bigg] \nonumber\\
& \qquad \qquad \qquad \qquad \qquad \fifthequal \frac{\binom{K-2}{{|\Omega^m_r|}-2}}{\binom{K}{{|\Omega^m_r|}}} \sum\limits_{k =1}^{K} \sum\limits_{k' =1, k' \ne k}^{K} \langle \widehat{\boldsymbol{\mathcal{L}}}_k (r) - \widehat{\boldsymbol{\mathcal{L}}} (r), \widehat{\boldsymbol{\mathcal{L}}}_{k'} (r) - \widehat{\boldsymbol{\mathcal{L}}} (r) \rangle \nonumber\\
& \qquad \qquad \qquad \qquad \qquad = \frac{{|\Omega^m_r|}({|\Omega^m_r|}-1)}{K(K-1)} \sum\limits_{k =1}^{K} \sum\limits_{k' =1, k' \ne k}^{K} \langle \widehat{\boldsymbol{\mathcal{L}}}_k (r) - \widehat{\boldsymbol{\mathcal{L}}} (r), \widehat{\boldsymbol{\mathcal{L}}}_{k'} (r) - \widehat{\boldsymbol{\mathcal{L}}} (r) \rangle.   
\end{align}

Due to the fairness scheduling:
\begin{align}
  \frac{{|\Omega^m_r|}({|\Omega^m_r|}-1)}{K(K-1)}  = 1 
\end{align}

As a result, we substitute \eqref{AppPDPLemmaTemr_1_Eq_3} and \eqref{AppPDPLemmaTemr_1_Eq_4} into \eqref{AppPDPLemmaTemr_1_Eq_2} which yields
\newcommand\sixthequal{\mathrel{\overset{\makebox[0pt]{\mbox{\normalfont\tiny\sffamily (c)}}}{=}}}
\newcommand\seventhequal{\mathrel{\overset{\makebox[0pt]{\mbox{\normalfont\tiny\sffamily (d)}}}{=}}}
\newcommand\sixthinequal{\mathrel{\overset{\makebox[0pt]{\mbox{\normalfont\tiny\sffamily (e)}}}{\le}}}
\newcommand\seventhinequal{\mathrel{\overset{\makebox[0pt]{\mbox{\normalfont\tiny\sffamily (f)}}}{\le}}}
\begin{align}\label{AppPDPLemmaTemr_1_Eq_5}
& \mathbb{E} \left[ \left\| {\boldsymbol{W}} (r+1) - {\boldsymbol{W'}} (r+1) \right\|_2^2 \right] = \frac{1} {K^2}  \sum\nolimits_{k =1}^{K} \mathbb{E} \left[ \left\| \widehat{\boldsymbol{\mathcal{L}}}_k (r) - \widehat{\boldsymbol{\mathcal{L}}} (r) \right\|_2^2 \right]  \nonumber\\
& \qquad \;\;\;\; + \frac{1} {K^2} \sum\nolimits_{k =1}^{K} \sum\nolimits_{k' =1, k' \ne k}^{K} \mathbb{E} \left[ \langle \widehat{\boldsymbol{\mathcal{L}}}_k (r) - \widehat{\boldsymbol{\mathcal{L}}} (r), \widehat{\boldsymbol{\mathcal{L}}}_{k'} (r) - \widehat{\boldsymbol{\mathcal{L}}} (r) \rangle \right] \nonumber\\
& \qquad \sixthequal \frac{1} {K^2} \sum\nolimits_{k =1}^{K} \mathbb{E} \left[ \left\| \widehat{\boldsymbol{\mathcal{L}}}_k (r) - \widehat{\boldsymbol{\mathcal{L}}} (r) \right\|_2^2 \right]\nonumber\\
& \qquad = \frac{1} {K^2} \left( \sum\nolimits_{k =1}^{K} \mathbb{E} \left[ \left\| \widehat{\boldsymbol{\mathcal{L}}}_k (r) \right\|_2^2 \right] - \mathbb{E} \left[ \left\| \widehat{\boldsymbol{\mathcal{L}}} (r) \right\|_2^2 \right] \right) \nonumber\\
& \qquad \le \frac{1} {K^2}  \sum\nolimits_{k =1}^{K} \mathbb{E} \left[ \left\| \widehat{\boldsymbol{\mathcal{L}}}_k (r) \right\|_2^2 \right] \nonumber\\
& \qquad \seventhequal \frac{1 }{K^2} \sum\nolimits_{k =1}^{K} \mathbb{E} \left[ \left\| {\boldsymbol{\mathcal{L}}}_k (r) \right\|_2^2 \right] \nonumber\\
& \qquad = \frac{1 \eta^2(r)}{K^2} \sum\nolimits_{k =1}^{K} \mathbb{E} \left[ \left\| \sum\nolimits_{t=1}^{\mathcal{T}} \nabla F_k \left( {\boldsymbol{W}}_k^t (r), \mathfrak{D}_k^t (r) \right) \right\|_2^2 \right] \nonumber \\
& \qquad \sixthinequal \frac{ \eta^2(r) \mathcal{T}}{K^2} \sum\nolimits_{k =1}^{K} \sum\nolimits_{t=1}^{\mathcal{T}} \mathbb{E} \left[ \left\| \nabla F_k \left( {\boldsymbol{W}}_k^t (r), \mathfrak{D}_k^t (r) \right) \right\|_2^2 \right] \nonumber\\
& \qquad \seventhinequal \frac{  \eta^2(r) \mathcal{T}^2 \varrho^2}{K(K-1)},
\end{align}
where 
\begin{align}
\left\| \sum\nolimits_{k =1}^{K} \left( \widehat{\boldsymbol{\mathcal{L}}}_k (r) - \widehat{\boldsymbol{\mathcal{L}}} (r) \right) \right\|_2^2 = \epsilon, \epsilon \approx  0 
\end{align}
This is due to the convexity of the loss function, which proves that the proposed scheduling algorithm ensures the convergence to the optimal without any gap.
\end{document}